%%
%% Copyright 2007-2020 Elsevier Ltd
%%
%% This file is part of the 'Elsarticle Bundle'.
%% ---------------------------------------------
%%
%% It may be distributed under the conditions of the LaTeX Project Public
%% License, either version 1.2 of this license or (at your option) any
%% later version. The latest version of this license is in
%% http://www.latex-project.org/lppl.txt
%% and version 1.2 or later is part of all distributions of LaTeX
%% version 1999/12/01 or later.
%%
%% The list of all files belonging to the 'Elsarticle Bundle' is
%% given in the file `manifest.txt'.
%%
%% Template article for Elsevier's document class `elsarticle'
%% with harvard style bibliographic references

% \documentclass[preprint,12pt,authoryear]{elsarticle}

%% Use the option review to obtain double line spacing
% \documentclass[authoryear,preprint,review,12pt]{elsarticle}

%% Use the options 1p,twocolumn; 3p; 3p,twocolumn; 5p; or 5p,twocolumn
%% for a journal layout:
%% \documentclass[final,1p,times,authoryear]{elsarticle}
% \documentclass[final,1p,times,twocolumn,authoryear]{elsarticle}
% \documentclass[final,3p,times,authoryear]{elsarticle}
\documentclass[final,3p,times,twocolumn,authoryear]{elsarticle}
%% \documentclass[final,5p,times,authoryear]{elsarticle}
%% \documentclass[final,5p,times,twocolumn,authoryear]{elsarticle}

%% For including figures, graphicx.sty has been loaded in
%% elsarticle.cls. If you prefer to use the old commands
%% please give \usepackage{epsfig}
\usepackage{algorithm}
\usepackage{algpseudocode}
\usepackage{pgfplots}
\usepackage{mathtools}
\usepackage{nicematrix}
\usepackage{multicol}
\usepackage{graphicx}
\usepackage[normalem]{ulem}
\usepackage{svg}
\newcommand{\real}{\mathbb{R}}
\newcommand{\Tau}{\mathcal{T}}
%% The amssymb package provides various useful mathematical symbols
\usepackage{amssymb}
%% The amsthm package provides extended theorem environments
%% \usepackage{amsthm}
\pgfplotsset{compat=1.18}
%% The lineno packages adds line numbers. Start line numbering with
%% \begin{linenumbers}, end it with \end{linenumbers}. Or switch it on
%% for the whole article with \linenumbers.
%% \usepackage{lineno}

\journal{Applied Soft Computing}

\begin{document}

\begin{frontmatter}

%% Title, authors and addresses

%% use the tnoteref command within \title for footnotes;
%% use the tnotetext command for theassociated footnote;
%% use the fnref command within \author or \affiliation for footnotes;
%% use the fntext command for theassociated footnote;
%% use the corref command within \author for corresponding author footnotes;
%% use the cortext command for theassociated footnote;
%% use the ead command for the email address,
%% and the form \ead[url] for the home page:
%% \title{Title\tnoteref{label1}}
%% \tnotetext[label1]{}
%% \author{Name\corref{cor1}\fnref{label2}}
%% \ead{email address}
%% \ead[url]{home page}
%% \fntext[label2]{}
%% \cortext[cor1]{}
%% \affiliation{organization={},
%% addressline={},
%% city={},
%% postcode={},
%% state={},
%% country={}}
%% \fntext[label3]{}

\title{Monte Carlo Optimization for Solving
Multilevel Stackelberg Games}

%% use optional labels to link authors explicitly to addresses:
%% \author[label1,label2]{}
%% \affiliation[label1]{organization={},
%% addressline={},
%% city={},
%% postcode={},
%% state={},
%% country={}}
%%
%% \affiliation[label2]{organization={},
%% addressline={},
%% city={},
%% postcode={},
%% state={},
%% country={}}

\author[inst1]{Pravesh Koirala}

\affiliation[inst1]{organization={Vanderbilt University},%Department and Organization
% addressline={Address One},
city={Nashville},
% postcode={00000},
% state={Tennessee},
country={USA}}

\author[inst1]{Forrest Laine}

\begin{abstract}
%% Text of abstract
Stackelberg games originate where there are market leaders and followers, and the actions of leaders influence the behavior of the followers. Mathematical modelling of such games results in what's called a Bilevel Optimization problem. There is an entire area of research dedicated to analyzing and solving Bilevel Optimization problems which are often complex, and finding solutions for such problems is known to be NP-Hard. A generalization of Stackelberg games is a Multilevel Stackelberg game where we may have nested leaders and followers, such that a follower is, in turn, a leader for all lower-level players. These problems are much more difficult to solve, and existing solution approaches typically require extensive cooperation between the players (which generally can't be assumed) or make restrictive assumptions about the structure of the problem. In this paper, we present a stochastic algorithm to approximate the local equilibrium solutions for these Multilevel games. We then construct a few examples of such Multilevel problems, including: a) a nested toll-setting problem; and b) an adversarial initial condition determination problem for Robust Trajectory Optimization. We test our algorithm on our constructed problems as well as some trilevel problems from the literature, and show that it is able to approximate the optimum solutions for these problems within a reasonable error margin. We also provide an asymptotic proof for the convergence of the algorithm and empirically analyze its accuracy and convergence speed for different parameters. Lastly, we compare it with existing solution strategies from the literature and demonstrate that it outperforms them.
\end{abstract}

\begin{keyword}
%% keywords here, in the form: keyword \sep keyword
Stackelberg games \sep Multilevel Optimization \sep Monte-Carlo algorithm \sep Trajectory optimization \sep Adversarial optimization
%% PACS codes here, in the form: \PACS code \sep code
% \ 10003752.10010070.10010099.10010100 \sep 10003752.10010070.10010099.10010102
%% MSC codes here, in the form: \MSC code \sep code
%% or \MSC[2008] code \sep code (2000 is the default)
\end{keyword}

\end{frontmatter}

%% \linenumbers
\newcommand{\olsi}[1]{\,\overline{{#1}}} % overline short italic

%% main text
\section{Introduction}
\label{sec:introduction}
Stackelberg Equilibriums are well-known and extensively studied economic phenomena. In their most rudimentary form, they occur when there is a market \textit{leader} whose decision influences one or many market \textit{followers}. These leaders and followers are constrained in their own way and are assumed to be rational players who seek to minimize their costs (or maximize their profits) while satisfying their constraints. Mathematical modeling of these games gives rise to a Bilevel Optimization problem of the following form:

\begin{align*}
\min_{x_1\in \real^{n_1}, x_2 \in \real^{n_2}} ~~&f^1(x_1, x_2) \\
s.t. ~~&g^1(x_1, x_2) \ge 0\\
&x_2 \in \arg\min_{x_2 \in \real^{n_2}} ~~f^2(x_1, x_2) \\
&~~~~~~~~~~~~~~~~~s.t.~~g^2(x_1, x_2)\ge 0
\end{align*}
Where the upper level player with the objective $f^1(x_1, x_2): \real^{n_1+n_2} \mapsto \real$ and constraints $g^1(x_1, x_2): \real^{n_1+n_2}\mapsto \real^{m_1}$ optimizes over $x_1 \in \real^{n_1}$ knowing that its choice of $x_1$ causes the lower-level player to adapt its response variable $x_2\in \real^{n_2}$ to minimize its objective $f^2(x_1, x_2): \real^{n_1+n_2}\mapsto \real$ subject to its constraints $g^2(x_1, x_2):\real^{n_1+n_2}\mapsto\real^{m_2}$. It is also generally assumed that the objective functions $f^1$ and $f^2$ and the constraints $g^1$ and $g^2$ are twice differentiable. But even with these assumptions, solution set for problems of this form generate not only non-convex but also non-smooth manifolds. In fact, finding an equilibrium point or a solution to these problems is known to be NP-Hard \citep{ben1990computational, blair1992computational}. Popular strategies to solve these problems are Vertex Enumeration methods \citep{bialas1984two}, Complementary Pivoting methods \citep{judice1992sequential}, Mixed Integer Programming or Branch and Bound methods \citep{bard1990branch}, and meta-heuristics based methods such as Genetic Algorithms \citep{oduguwa2002bi} and Particle Swarm optimization \citep{Han2016AST} etc.

The problem discussed above is called a Bilevel problem because it has two levels of optimizers (alternatively referred to as decision makers or players in this text) with their own sets of decision variables and constraints. A natural extension of such a leader-follower game is, then, a Multilevel Stackelberg game that can be modeled as:
\begin{align*}
&Level_1& &~~~~~~~~~~~\hdots\\
&& &~~~~~~~~~~~~~\vdots\\
&Level_l & \min_{x_l, x_{l+1}..., x_L} ~~&f^l(X) \\
&& s.t. ~~&g^l(X) \ge 0\\
&Level_{l+1} & &x_{l+1},...,x_L \in \arg\min_{x_{l+1},...,x_L} ~~f^{(l+1)}(X) \\
&& &~~~~~~~~~~~~~~~~~~~~s.t.~~g^{(l+1)}(X)\ge 0\\
&& &~~~~~~~~~~~~~\vdots\\
&Level_{L}& &~~~~~~~~~~~\hdots\\
\end{align*}
Where $l \in {1, 2, ..., L}$ indicates player level and $x_l \in \real^{n_l}$ is the variable that player $l$ controls. Similarly, $X \in \real^{n}$ (where $n=n_1+n_2+...+n_L$) is the concatenation of all $x_l$'s and, therefore, the entire search space of the problem. The objectives and constraints for each player are defined as $f^l: \real^{n}\mapsto\real$ and $g^l:\real^{n}\mapsto\real^{m_l}$. It is often assumed that no two players share any degrees of freedom (or decision variables) but we make no such assumptions in this work. To be precise, any player at level $l$ is a Stackelberg follower of all preceding players at level $1, 2, ...~ l-1$ and is simultaneously a Stackelberg leader for all players at level $l+1~...~L$. Like before, each player cares for their own objective and has their own constraints. To define the solution of the problem, we start with the concept of a rational reaction set for the final player L, $\phi^L(x_1, ... x_{L-1})$ defined as:
\begin{align*}
\phi^L(x_1, ... x_{L-1}) &:= \arg\min_{x_L} f^L(X) \\
&~~~~s.t.~~g^L(X) \ge 0
\end{align*}
Then, the rational reaction set for any player $l$, i.e. $\phi^l(x_1...x_{l-1})$ can be recursively defined as:
\begin{align*}
\phi^l(x_1...x_{l-1}) &:= \arg\min_{x_l, x_{l+1},~...~x_L} f^l(X) \\
&~~~~~~~~~s.t.~~g^l(X) \ge 0 \\
&~~~~~~~~~~(x_{l+1}...x_L) \in \phi^{l+1}(x_1...x_l)
\end{align*}
The solution to the entire problem is then: \begin{align*}
\phi^1 &:= \arg\min_{x_1,~...~x_L} f^1(X) \\
&~~~~s.t.~~g^1(X) \ge 0 \\
&~~~~~~~~~~(x_{2}...x_L) \in \phi^2(x_1)
\end{align*}
It must be noted that in general, $\phi^l$ may not be a singleton, and therefore, there can be multiple local solutions for the problem.

These problems are not new and have been researched over the years in domains such as Economics, Optimal Control, Operations Research, and Decision Programming etc., for example, to model multi-stakeholder fund allocation, supply chain networks, inventory management, and power system security \citep{han2015tri, 4252262, Fard2018ATL, Cassidy1971EfficientDO}. There are further generalizations of multilevel problems that include multiple players at each level (sometimes called a Multilevel Decentralized problem) who have equal deciding power amongst themselves but are, as before, followers for players above them and leaders for players below them. In this work, we restrict ourselves to multilevel optimization problems with only a single decision maker at each level and introduce a monte-carlo sampling based method to find solutions of such multilevel optimization problems. We also model a robust trajectory optimization problem and a generalized version of the toll-setting problem as multilevel problems and use our algorithm to find solutions for them. In summary, the main contributions of this paper are:
\begin{itemize}
\item A simple yet powerful monte-carlo method to solve multilevel problems.
\item Modeling adversarial initial condition determination and nested toll-setting problem as multilevel optimization problems and obtaining their solutions via our proposed algorithm.
\end{itemize}
The remainder of this paper is structured as follows: In section \ref{section:lit}, we explore some of the works related to such multilevel optimization problems, including some of the algorithms proposed to solve them. In section \ref{section:mcmo}, we propose a stochastic algorithm to solve problems of this kind. Then, in section \ref{section:problems}, we construct two such multilevel problems: a) a robust optimization problem of finding adversarial initial condition, and b) a nested toll-setting problem, and discuss the nature of their solutions. Then, in section \ref{sec:experiment}, we apply this algorithm to solve a few problems from existing literature in addition to the constructed problems from section \ref{section:problems} and compare the obtained solutions. In section \ref{sec:comparisons}, we perform empirical comparisons to study the convergence speed and computation time of the proposed algorithm. Finally, in section \ref{section:conclusions} we pave the way for further research by outlining some of the possible improvements we envision in this domain and proceed to conclude the work with a brief recap.

\section{Literature Review}
\label{section:lit}

Stackelberg games and Bilevel Optimizations are well-researched problems, and we refer readers to \citet{Dempe2020BilevelOT} in lieu of attempting a survey ourselves. Henceforth, we limit ourselves to works related to trilevel or general multilevel problems.
\subsection{Linear Multilevel Problems}\cite{Cassidy1971EfficientDO} first modeled the flow of resources between the federal, state, and municipal levels as a trilevel problem and provided a recursive dynamic algorithm for solving such problems. \cite{Bard1984AnIO} later established stationarity conditions for trilevel linear optimization problems, generalized it to p-level stationarity problems, and devised a Cutting plane algorithm to solve them. \cite{ue1986hybrid} devised a hybrid method based on the K-th best algorithm and Parametric Complementary Pivot algorithm to solve trilevel linear problems. \cite{anandalingam1988mathematical} devised another method for solving trilevel linear problems by first obtaining and embedding the first-order necessary conditions (FONCs) of the third-level problem into the second-level problem, then obtained FONCs of thusly obtained problem and embedded it into the first-level problem. \cite{Benson1989OnTS} investigated a specific case where linear multilevel problems are unconstrained and performed rigorous geometric analysis. Their major result was to show that the feasible solution set of such problems is a union of connected polyhedral regions. \citet{White1997PenaltyFA} modified \citet{Bard1984AnIO} 's method by changing the first step in their algorithm and claimed a qualitative improvement on the overall results.
\subsection{Fuzzy Set / Goal Programming-Based Approaches}
\cite{Lai1996HierarchicalOA} considered a fuzzy set based algorithm to model and solve linear bilevel and multilevel problems. \cite{Shih1996FuzzyAF} later improved it to model problems that are not just hierarchical but also decentralized, or both, in nature. \cite{Pramanik2007FuzzyGP} modeled the multilevel problem as a fuzzy goal programming problem to solve it. \cite{Zhang2010ModelSC} presented a kth-best algorithm to solve linear trilevel programming problems and solved a constructed problem of annual budget allocation in a company with CEO, branch heads, and group supervisors.

\subsection{Meta-heuristics based approaches}
\cite{woldemariam2015systematic} developed a genetic algorithm based method to solve arbitrarily deep multilevel problems for bounded decision variables. \cite{Han2016AST} devised a particle swarm optimization based method to solve bilevel problems and used it to solve a trilevel problem as well by embedding the stationarity conditions of the last level problem into the second level problem and converting the entire structure into a bilevel programming problem. At this point, we must also mention \cite{Lu2016MultilevelDA}'s survey of multilevel decision-making problems, which, although a bit dated, is an excellent resource for multilevel problems, algorithms, and applications developed until 2016.
\subsection{Applications}
\cite{Han2017TrilevelDF} used Vertex Enumeration method to solve a decentralized supply chain network involving manufacturers, logistic companies, and consumers modeled as a trilevel decentralized programming problem. \cite{Fard2018ATL} modeled a multi-stakeholder supply chain problem as a trilevel problem and used five different meta-heuristic algorithms to solve them by solving each level in a turn-based fashion. They also later modeled a tire closed-loop supply chain network as a trilevel problem and solved it using a similar approach \citep{Fard2018HybridOT}. \cite{tilahun2012new} developed a turn-based optimization strategy similar to \cite{Fard2018ATL} to solve general Multilevel problems and later generalized it to solve fuzzy Multilevel, multi-objective problems with collaboration. \cite{Tilahun2019FeasibilityRA}. \cite{Tian2019MultilevelPC} formulated a coordinated cyber-attack scenario as a trilevel problem and used the column and constraint generation method to obtain a solution. \cite{luo2020energy} modeled an Energy scheduling problem as a trilevel optimization problem and exploited its structure to obtain a closed analytical expression. \cite{laine2023computation} later developed a general algorithm to find solutions to Generalized Feedback Nash Equilibrium problems, which can be modeled as a Multilevel Stackelberg problem.

From the literature review, it is clear that multiple methods exist to solve trilevel problems, but only a few of these can be generalized to solve an arbitrarily deep multilevel problem. Even then, we find that each method has its own limitations. For instance, fuzzy set based methods (\cite{Lai1996HierarchicalOA, Shih1996FuzzyAF, Pramanik2007FuzzyGP}) implicitly assume some degree of cooperation from lower levels, which is not an assumption that holds for every problem. Similarly, turn-based methods of \cite{tilahun2012new, Tilahun2019FeasibilityRA, Fard2018ATL, Fard2018HybridOT} are iterative best response algorithms that are more suited to find solutions to Nash equilibrium problems, and since they do not take into account the rational reactions of lower-level players, they do not converge towards the Stackelberg equilibrium. \cite{woldemariam2015systematic}'s genetic algorithm is quite promising, but it only works for bounded variables, which makes it inapplicable for a wide class of problems. Similarly, \cite{laine2023computation}'s algorithm is applicable only under assumptions of strong complementarity.

In light of these facts, we propose an algorithm in section \ref{section:mcmo} that solves all of the outlined concerns above. Furthermore, we demonstrate in section \ref{sec:comparisons} that even though it's simple and intuitive, it outperforms the existing methods of similar nature. Compared to other algorithms, our proposed algorithm has the advantage that:
\begin{itemize}
\item It can handle problems with unbounded decision variables and, thus, is applicable to a wider class of problems.
\item It can handle problems with non-differentiable objectives, as long as the final objective is differentiable.
\item It can handle equality constraints present at the final level, unlike other Meta-heuristic algorithms, which fail to handle any equality constraints at all without any reformulations.
\item It does not require any reformulations of the objective functions and, thus, can solve problems that can't be approached via KKT or Value function based reformulations.
\item It's an anytime algorithm and can be tuned to obtain arbitrary accuracy at the expense of computation.
\end{itemize}

\section{Monte Carlo Multilevel Optimization (MCMO)}
\label{section:mcmo}

Some of the notations used in the algorithm are as follows:
\begin{align*}
L \in \mathbb{N} &: \text{Number of players}.\\
n_l \in \mathbb{N} &: \text{Number of variables for player }l\\
x_l \in \real^{n_l} &: \text{Variables that $l$-th player controls}\\
C^l &: \text{Feasible region for player }l\\
X \in \real^{n} &: \text{Concatenation of all $x_l$'s}\\
\end{align*}
\begin{align*}
C = \bigcap_{l=1}^{L} C^l &: \text{Feasible region for the problem }\\
x_s &: \text{Initially feasible point s.t. } x_s \in C \\
f^l &: \text{Objective function of player }l ~(\mathbb{R}^{n}\mapsto \mathbb{R}) \\
D^l := \real^{n_l} &: \text{Subspace spanned by}
~x_l.\\
\alpha^l \in \real^+ &: \text{Step size for player } l\\
N^l \in \mathbb{N} &: \text{Number of samples generated for player }l\\
M^l \in \mathbb{N} &: \text{Number of sampling iterations for } l
% \eta > 1 &: \text{Cooldown parameter}
\end{align*}

Apart from the notations above, we use some colloquial array notations as follows:
\begin{align*}
[~] &: \text{Empty array}\\
X[a:b] &: \text{Slice of X from index a to b inclusive}\\
X[a:end] &: \text{Slice of X from index a to the length}\\
&~~~ \text{of X inclusive} \\
X~.+b &: \text{A broadcasting summation operator.}\\
\end{align*}

MCMO is a sampling based algorithm. It iteratively refines any given approximate solution by generating samples in its neighborhood. These samples are successively passed down to each lower-level players, who generate samples of their own and pass them down to their lower-level players. This continues until the very last level, where a solver is used to obtain the solution for $x_L$ given the variables $x_1, ... x_{L-1}$ for the corresponding objective and constraints. Once these solutions are obtained, they are returned to upper-level players who evaluate them, select the best among them for their own objectives, and subsequently return them to their upper levels. At level 1, all returned solutions are evaluated, and the best among them is kept as the current estimate of the solution. In this way, MCMO acts as a gradient-free solver and does not require gradient information for any objective or constraint function except the last one. Similarly, since the last level is always solved by using a solver, MCMO can accommodate both equality / inequality constraints for that level so long as it's supported by the solver. MCMO is described in Algorithm \ref{algo:mcmo}. In essence, it takes an initially feasible point and continuously searches in its neighborhood for a better feasible and optimal point for a specified number of iterations. When the desired number of iterations is reached, MCMO returns a smoothed result from the last $k$ obtained iterates, as outlined in subsection \ref{subsec:smoothing}.

\begin{algorithm}
\caption{$MCMO~(x_s, k)$}
\begin{algorithmic}[1]
\State{$X \gets x_s$}
\State{$P \gets [X]$}
\For{$i \in [1 ... maxiter]$}
\State {$X \gets OPTIMIZE(X, 1) ~or~ X$ \textit{\# stick with same point if no better point found.}}
\State{$P \gets P \cup X$}
\EndFor
\State{$\textbf{return} SMOOTHEN(P, k)$}
\end{algorithmic}
\label{algo:mcmo}
\end{algorithm}

The Optimize function defined in Algorithm \ref{algo:optimize} takes as input an initially feasible point $x_s$ and a level $l$ ($=1$ for initial call). For the final player ($l=L$), this function uses a solver, IPopt \citep{wachter2006implementation} in this case, to optimize for the final objective $f^L$ subject to the constraints $C^L$. In all other cases, it generates $N^l+1$ random directions (including the \textbf{zero} direction) in the subspace $D^l$ to obtain new candidate points, which are then recursively passed to the optimizers of the lower-level player, i.e., $l+1$. These passed candidate points are then recursively perturbed by the lower-level players and returned. Out of all the returned values, player $l$ keeps the perturbed candidates that's best for its objective and satisfies its feasibility constraints. This process is repeated by the player $l$ for $M^l$ number of times, where, at the end of each such sampling iteration, it chooses the point that is the best among all obtained candidates in that iteration and uses it for the next iteration. At the end, it returns the final obtained best candidate to the upper player $l-1$. In the event that no feasible point can be found at any iteration, the last known best candidate point is retained and used for the next iteration. If no feasible point can be found even after $M^l$ iterations, the function returns $null$. In this way, this function can obtain solution for multilevel problems with arbitrary levels.

The algorithm uses three sub-procedures \textit{SOLVE\_FULL}, \textit{ARGMIN}, and \textit{RAND\_DIRECTIONS}. These sub-procedures are intuitive, and thus, we only explain but do not explicitly outline them here. \textit{SOLVE\_FULL} takes, in order, an initial point, an objective, a set of constraints, and player level (to determine degrees of freedom to optimize on) and uses a solver to fully solve it to completion. Similarly, \textit{ARGMIN} takes, in order, a list of candidate points, the player level $l$, and determines the best point according to the objective function $f^l$ ignoring any null points in the given list. Finally, \textit{RAND\_DIRECTIONS} generates $N^l$ random directions from a uniform hypercube (of length 1) centered at the origin in the subspace $D^l$.

\begin{algorithm}
\caption{$OPTIMIZE~(X, l)$}
\begin{algorithmic}[1]
\Statex{\textit{Require: $f^l, C^l, M^l, \alpha^l, N^l, D^l$}}
\State{$X_R \gets null$}
\If{$l = L$}
\State{$X_R \gets SOLVE\_FULL(X, f^L, C^L, L)$}
\If{$X_R \not\in C^L$}
\State{\textbf{return} \textit{null}}
\Else
\State{\textbf{return} $X_R$}
\EndIf
\EndIf
\For {$k \in \{1, ..., M^l\}$}
\State{\textit{\# generate candidate points}}
\State{$X_C \gets X ~{.+}~ \{ \alpha^l \cdot $}
\Statex{$~~~~~~~~~~~~~~~~~~~~~~~~~~~~~RAND\_DIRECTIONS(N^l, D^l) \cup \textbf{0} \}$}
\For{$x \in X_C$}
\State{$x \gets OPTIMIZE(x, l+1)$}
\If{$x \not\in C^l$}
\State{\textbf{continue}}
\EndIf
\State{$X_R \gets ARGMIN(X_R, x, l)$}
\EndFor
\State{$X \gets X_R ~or~ X$}
\EndFor
\State{\textbf{return} $X_R$}
\end{algorithmic}
\label{algo:optimize}
\end{algorithm}

Ideally, MCMO should be used with a high number of samples and sampling iterations, i.e. $N^l, M^l$ to obtain accurate results, as only by doing so can we solve all lower levels to completion before optimizing any upper-level problem. But this can result in a lot of computational overhead, as outlined in section \ref{sec:comparisons}. So, in practice, we select a reasonable $N^l, M^l$ for each level while still keeping the problem computationally tractable. But this will result in stochastic estimates (as opposed to true solutions), which is precisely what limits MCMO to an approximate algorithm.

\subsection{Initialization}
\label{subsec:initialization}
MCMO requires that an initially feasible (not necessarily optimal) point $x_s \in C$ be provided. A viable option to achieve such an initially feasible point is to solve the following problem:
\begin{align*}
x_s = &\arg\min_{X}~~ 0\\
&~~~~~~~~ s.t. ~X \in C
\end{align*}
However, a heuristic to achieve a reasonably optimal starting point for a non-trivial problem is to take the weighted sum of the objectives. Which is to say, we solve the following optimization problem to obtain such an initially feasible point:
\begin{align*}
x_s = &\arg\min_{X}~~ \sum_{l=1}^{L} w_l f^l(X)\\
&~~~~~~~~~~~ s.t. ~X \in C
\end{align*}
Where, $w_l$'s are chosen as required. This heuristic yields better starting points in cases where the true solution lies closer to the pareto front of the involved objective functions.

\subsection{Smoothing}
\label{subsec:smoothing}
Since MCMO is a stochastic algorithm, it can only provide approximate solutions as all the lower-level problems are not completely solved. This is especially true when the number of samples generated ($N^l$) or the number of sampling iterations ($M^l$) are too low and $\alpha^l$ is high. Therefore, at the end of the algorithm, the last $k<maxiter$ points are used to obtain a more stable approximation of the equilibrium point by using a smoothing scheme. The choice of smoothing scheme may depend upon the problem, but in this work, we use the following scheme:

\noindent \textbf{Best Objective Smoothing Scheme:} $X^*$ is approximated as $X^*=\arg\min_{X \in \{X_1, X_2, ... X_k\}} f^1 (X)$. Where $f^1$ is the objective function of the first player. This scheme is guaranteed to produce a feasible point (since all $X_1, X_2, ... X_k$ are feasible, as shown in subsection \ref{subsec:feasible}).

\subsection{Practical Considerations}
\label{subsec:pracon}
The performance of the algorithm is reliant on the number of samples $N^l$ generated per level, number of sampling iterations $M^l$, choice of $\alpha^l$, and the number of iterations $maxiter$. In general, more samples and sampling iterations would improve the accuracy of the solution, but at the expense of computation costs. Similarly, a large $\alpha^l$ may prevent convergence, whereas a low $\alpha^l$ would delay it. An appropriate way of running MCMO is thus to start off with low $N^l, M^l$ and high $\alpha^l$ and then fine-tune the result with a lower value of $\alpha^l$ and a higher sample size $N^l$ and iteration $M^l$ to the desired accuracy. Due to the nature of multilevel optimization problems, lower-level players must be provided with greater deciding powers than any upper-level players. This is especially true when the degrees of freedom are shared between the upper and lower level players. Consider for example, the problem 
\begin{alignat*}{2}
    &\max_{x} ~x \\
             &~~s.t. ~x \in &&\arg\min_x x\\
             &&&s.t.~x \in [l, u]
\end{alignat*}
In this case, the solution for this problem for $x\in[l, u]$ is $l$. In terms of games, when the first-level player chooses any $x=x'$, the second-level player will choose $x=l$, overriding any choice of the variable $x$ made by the upper-level player.
Therefore, to achieve true solutions, $\alpha^l,$ $N^l, M^l$ for each subsequent level should be increased. Additionally, the choice of $N^l$ should also consider the degrees of freedom. If a player has control of two variables, they must be allowed to sample more directions than if they only had one decision variable. This ensures that the sampling is fair for all levels.

However, for simple problems, it may also be desirable to use the same $\alpha$ per player for maintaining a lean parameter space. And if bounds on the player's variables are known, it can guide the choice of $\alpha$.

\subsection{Computation Time}
MCMO is a recursive sampling based algorithm and thus, its computation time increases exponentially with each additional level. Furthermore, the computation time will also depend upon the parameters $N^l, M^l, maxiter$ and the nature of the problem itself. While parallelizing the implementation may provide speedups, for this work, we do not attempt such efforts and have left it for future improvements. A detailed empirical analysis of computation time can be found in section \ref{sec:comparisons}.

An implementation of the algorithm can be found on GitHub (https://github.com/VAMPIR-Lab/MCMO).

\subsection{Proofs}
This section presents proofs for the feasibility and convergence of the MCMO algorithm.
\subsubsection{Proof of Feasibility}
\label{subsec:feasible}
\newtheorem{lemma}{Lemma}
\newproof{proof}{Proof}

\begin{lemma}
\label{lemma:inductionbase}
Any non-null point X returned from a function call of the form $X=~$OPTIMIZE($\cdot$ , l) is feasible for level l, i.e., $X \in C^l$.
\end{lemma}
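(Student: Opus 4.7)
The plan is to establish the lemma by reverse induction on the level $l$, taking $l = L$ as the base case and descending to $l = 1$. For the base case, I would simply inspect the $l = L$ branch of OPTIMIZE: the algorithm computes $X_R$ via SOLVE\_FULL and then explicitly tests $X_R \in C^L$, returning \textit{null} whenever this test fails. So any non-null value returned at level $L$ has automatically been verified to lie in $C^L$.

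For the inductive step, assume the claim holds at level $l+1$. I would then trace the evolution of the local variable $X_R$ inside the two nested loops at level $l$. Its only non-null assignment happens through the ARGMIN call, which is guarded by a "continue" on the membership test $x \in C^l$. The preceding recursive call $x \gets \mathrm{OPTIMIZE}(x, l+1)$ overwrites $x$, and by the inductive hypothesis this new value either is null or lies in $C^{l+1}$; crucially, membership in $C^{l+1}$ says nothing about $C^l$, which is why the subsequent explicit feasibility check against $C^l$ is essential to the argument. Combining these observations yields the loop invariant "$X_R = \textit{null}$ or $X_R \in C^l$", preserved by each guarded ARGMIN update (using that ARGMIN ignores null arguments) and therefore maintained across both the inner loop over $X_C$ and the outer loop over $k$.

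The main obstacle is not any deep mathematics but rather a careful reading of the control flow: one must check that no code path can write to $X_R$ without first clearing the $x \in C^l$ guard, and that the null-returning case of the recursive call is handled correctly both by treating null as infeasible for the check and by letting ARGMIN discard null arguments. Once the invariant is in place, the lemma follows immediately, since OPTIMIZE returns the final contents of $X_R$, and any non-null final value must have been installed by some guarded ARGMIN assignment and hence lies in $C^l$.
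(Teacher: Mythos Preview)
Your proof is correct and follows essentially the same control-flow trace as the paper: for $l=L$ the explicit check at lines 4--8 handles it, and for $l<L$ the only non-null assignment to $X_R$ is guarded by the $x\in C^l$ test at line 15. Note, though, that your reverse-induction framework is superfluous---as you yourself observe, the inductive hypothesis about $C^{l+1}$ is never actually used in establishing $X_R\in C^l$, and the paper accordingly dispenses with induction and simply argues the two cases directly.
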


\begin{proof}
For the final level $l=L$, this is easy to see from Algorithm \ref{algo:optimize} lines 4–8. If a point is infeasible for that level, the if condition on line 4 causes a \textit{null} return. Otherwise, a feasible point is returned in line 7. For levels $l\neq L$, a non-null result can only be returned if $X_R$, which is initially null, is set with a non-null value $x$ in line 18. But line 18 can only execute if the feasibility condition of line 15 was satisfied, which means that the returned non-null value $X\in C^l$.
\end{proof}
\begin{lemma}
\label{lemma:inductionhypothesis}
Any non-null point X returned from a function call of the form $X=$ $OPTIMIZE(\cdot, l)$ is always obtained from a lower-level function call of the form $OPTIMIZE(\cdot, l+1)$ for $l < L$.
\end{lemma}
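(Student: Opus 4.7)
The plan is to argue directly from Algorithm \ref{algo:optimize}, restricted to the branch $l < L$ (since the hypothesis excludes $l = L$). First I would locate the single line at which a non-null value is returned, namely line 22, where the returned object is $X_R$. So the goal reduces to tracing every possible source of a non-null value for $X_R$ along any execution path with input level $l < L$.

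Next I would enumerate the writes to $X_R$ inside the function body. The variable is initialized to \textit{null} at line 1, and in the $l < L$ branch the only reassignment occurs at line 18, namely $X_R \gets ARGMIN(X_R, x, l)$. By the definition of $ARGMIN$ given in the text (it selects the best of its non-null inputs under $f^l$), the output of that call is either the previous value of $X_R$ or the current $x$. So any non-null state of $X_R$ is, transitively, one of the $x$ values that appeared at some earlier execution of line 18.

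The crux is then to identify where those candidate $x$ values come from. Each such $x$ is produced on line 14 by the assignment $x \gets OPTIMIZE(x, l+1)$, immediately before the feasibility test on line 15 and the $ARGMIN$ update on line 18. Consequently, every candidate fed into $ARGMIN$ is the return value of a recursive call $OPTIMIZE(\cdot, l+1)$, and so the eventual non-null $X_R$ returned at line 22 must equal the return value of some such recursive call. I would present this as a short induction on the number of sampling iterations $k \le M^l$ that have executed: the inductive invariant is ``$X_R$ is either \textit{null} or equal to some previously returned value of $OPTIMIZE(\cdot, l+1)$,'' which is trivially true at $k = 0$ and preserved by the single update rule on line 18.

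I do not expect any real obstacle here; the argument is essentially a control-flow inspection of Algorithm \ref{algo:optimize}. The only point that requires any care is making explicit the semantics of $ARGMIN$ with \textit{null} arguments, so that we can claim that the first non-null value assumed by $X_R$ comes from the $x$ argument rather than from the \textit{null} incumbent. Once that is spelled out, the conclusion is immediate.
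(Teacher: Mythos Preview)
Your argument is correct and follows essentially the same route as the paper: trace the only non-null assignment to $X_R$ in the $l<L$ branch to line 18, and observe that every candidate $x$ fed there was produced on line 14 by a call $OPTIMIZE(\cdot,\,l+1)$. You are simply more explicit than the paper about the semantics of $ARGMIN$ with \textit{null} arguments and about the invariant across sampling iterations, which only strengthens the presentation.
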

\begin{proof}
Since $l \neq L$, following arguments similar to lemma \ref{lemma:inductionbase}, it must have been set by line 18. But any such point is clearly obtained in line 14 by function call of the form $OPTIMIZE(\cdot, l+1)$. Hence, this is true.
\end{proof}
We can now prove the following claim:

\newtheorem{claim}{Claim}
\begin{claim}
Each iteration in MCMO function obtains a feasible point.
\end{claim}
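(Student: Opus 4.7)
The plan is to combine Lemmas \ref{lemma:inductionbase} and \ref{lemma:inductionhypothesis} via a finite downward induction on the recursion depth, and then lift the conclusion to the MCMO outer loop by induction on the iteration counter.

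First I would establish the following intermediate claim: any non-null point $X$ returned by a call $OPTIMIZE(\cdot, l)$ satisfies $X \in \bigcap_{j=l}^{L} C^j$. The base case $l=L$ is exactly Lemma \ref{lemma:inductionbase}. For the inductive step, fix $l < L$ and suppose the statement holds at level $l+1$. By Lemma \ref{lemma:inductionhypothesis}, the returned non-null value $X$ was itself produced by some inner call of the form $OPTIMIZE(\cdot, l+1)$, so by the induction hypothesis $X \in \bigcap_{j=l+1}^{L} C^j$. Simultaneously, Lemma \ref{lemma:inductionbase} applied at level $l$ gives $X \in C^l$. Intersecting yields $X \in \bigcap_{j=l}^{L} C^j$, closing the induction. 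Specializing to $l=1$, any non-null output of $OPTIMIZE(\cdot, 1)$ lies in $C = \bigcap_{l=1}^{L} C^l$, i.e.\ is feasible for the full problem.

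Next I would handle the outer MCMO loop in Algorithm \ref{algo:mcmo}. The iterate after step $i$, call it $X_i$, is updated by $X_i \gets OPTIMIZE(X_{i-1}, 1)$ \emph{or} $X_{i-1}$, meaning the algorithm retains $X_{i-1}$ whenever the optimizer returns $null$. I would argue by induction on $i$: the initial iterate $X_0 = x_s$ is feasible by the requirement on $x_s$ stated in subsection \ref{subsec:initialization}. Assume $X_{i-1} \in C$. Then either $OPTIMIZE(X_{i-1}, 1)$ returns a non-null value, which by the intermediate claim lies in $C$, or it returns $null$ and $X_i = X_{i-1} \in C$ by hypothesis. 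Either way $X_i \in C$, completing the induction and proving the claim.

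The only subtlety I anticipate is making precise the ``or'' semantics in line 4 of Algorithm \ref{algo:mcmo}, i.e.\ ensuring the reader understands that a $null$ return causes the previous feasible iterate to be kept rather than overwritten; stating this explicitly as part of the inductive step should suffice. Beyond that, the argument is essentially bookkeeping on the two lemmas, with no analytic content, so I do not expect any technical obstacle.
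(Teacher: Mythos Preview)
Your proposal is correct and follows essentially the same approach as the paper: both arguments chain Lemma~\ref{lemma:inductionhypothesis} through the levels to show a non-null return of $OPTIMIZE(\cdot,1)$ is feasible for every $C^l$, and then handle the null case via the fallback to the previous (feasible) iterate. Your version is simply a more explicit formalization, packaging the chain argument as a downward induction on $l$ and the outer-loop argument as an induction on the iteration counter.
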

\begin{proof}
From lemma \ref{lemma:inductionhypothesis}, we know that any non-null point obtained from function of the form $OPTIMIZE(\cdot, 1)$ is obtained from $OPTIMIZE(\cdot, 2)$, $OPTIMIZE(\cdot, 3)$, and so on until $OPTIMIZE(\cdot, L)$. Similarly, we also know from lemma \ref{lemma:inductionbase} that any non-null point thus obtained must be feasible for levels 1, 2, ... $L$-1, and L. Therefore, any non-null point obtained from an iteration of the MCMO algorithm is feasible for all levels. Furthermore, if a null point is obtained at any point, MCMO retains the last non-null point, which is either $x_s$, an initially feasible point, or another non-null point previously obtained in iteration that has already been shown to be feasible.
\end{proof}

\newcommand{\nbd}{\mathcal{N}}

\subsubsection{Proof of Convergence}
Any analytical reasoning for general multilevel problem is decidedly hard, and for stochastic or meta-heuristic algorithms, the difficulty only increases. Thus, we only present an asymptotic proof of convergence for a narrow class of problems that satisfy the following simplifying assumptions:
\begin{enumerate}
\item The rational reaction set $\phi^l(x_1, ..., x_{l-1})$ (as defined in section \ref{sec:introduction}) for player $l$ is a point-to-point map, i.e., all rational reactions are unique for given upper-level decisions.
\item A solution exists for the given problem, and the solver used for the final level can always find solutions when they exist.
\end{enumerate}
In general, assumption 1 may not be valid but may hold if the upper-level constraints are restrictive enough or if the topmost objective is strongly convex and we want to solve an optimistic multilevel optimization problem, i.e. lower levels cooperate with the topmost player for ambiguous rational reactions. Furthermore, this is a simplification that multiple analytic treatments of this problem \citep{liu1998stackelberg, woldemariam2015systematic} have made as arguing about the problem in general is intractable.

Under our assumption, for any multilevel Stackeblerg problem, the optimization that player $l$ solves, say $P^l(x_1, ... x_{l-1})$, condenses to:
\begin{align*}
P^l(x_1, ... x_{l-1}) := \min_{x_l} &f^l(x_1, ..., x_l, \phi^{l+1}(x_1, ..., x_l)) \\
&~~~~s.t.~~g^l(X) \ge 0 \\
\end{align*}

\begin{lemma}
\label{convbase}
$OPTIMIZE(\cdot~, L)$ solves $P^L(x_1, ... x_{L-1})$.
\end{lemma}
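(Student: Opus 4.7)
The plan is to unpack the definition of $OPTIMIZE(\cdot, L)$ in Algorithm \ref{algo:optimize} and match it against the definition of $P^L$. When the level argument equals $L$, the function takes only the branch in lines 2--8: it invokes $SOLVE\_FULL(X, f^L, C^L, L)$, checks feasibility against $C^L$, and either returns the obtained point or $null$. The claim is therefore essentially an unwrapping of notation, justified by Assumption 2.

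First I would note that $P^L(x_1,\ldots,x_{L-1})$ is a single-level optimization problem: the recursion defining $\phi^{l+1}$ bottoms out at $L$, so the objective is just $f^L(x_1,\ldots,x_L)$ and the constraint is $g^L(X) \ge 0$ with only $x_L$ free. Next, I would observe that $SOLVE\_FULL$, as described in the text following Algorithm \ref{algo:optimize}, is passed the player level $L$ precisely so that it treats only the coordinates in $D^L$ as decision variables while holding $x_1,\ldots,x_{L-1}$ fixed; it targets the objective $f^L$ subject to $C^L$. By Assumption 2, the underlying solver returns an optimizer of this problem whenever one exists.

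Finally, I would dispose of the feasibility check on line 4: since Assumption 2 guarantees both that a solution exists and that the solver finds it, the returned point lies in $C^L$, so the $null$ branch is not taken and the function returns a true optimizer of $P^L(x_1,\ldots,x_{L-1})$.

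The main obstacle is essentially nothing beyond bookkeeping; this is the base case for the induction that I expect the subsequent convergence argument to build on, and the content of the lemma is already baked into Assumption 2 together with the specification of $SOLVE\_FULL$.
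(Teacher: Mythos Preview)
Your proposal is correct and follows essentially the same approach as the paper: the paper's own proof simply notes that at $l=L$ the solver is invoked and, by the standing assumption that a solution exists and the solver can find it, the result is trivially true. Your version is just a more detailed unpacking of the same observation, including the explicit handling of the feasibility check on line 4.
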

\begin{proof}
For the last level, i.e., $l=L$, this function uses a solver to obtain the solution. Since it's assumed that a solution exists and that the solver can find it, this is trivially true.
\end{proof}

\begin{lemma}
\label{convhyp}
If $OPTIMIZE(\cdot~, l+1)$ solves $P^{l+1}(x_1, ... x_l)$, then $OPTIMIZE(\cdot~, l)$ solves $P^l(x_1, ... x_{l-1})$ given $N^l, M^l \to \infty$
\end{lemma}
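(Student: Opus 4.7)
The plan is to invoke the inductive hypothesis to collapse the multilevel structure at level $l$ into an effectively single-level random search on a reduced objective, and then show this random search converges as the sample budgets diverge. First, apply the hypothesis to every inner call $x \gets OPTIMIZE(x, l+1)$ on line 14 of Algorithm \ref{algo:optimize}: the returned point is exactly $(x_1, \ldots, x_l, \phi^{l+1}(x_1, \ldots, x_l))$ whenever a feasible completion exists (and is \textit{null} otherwise), and by simplifying assumption 1 this response is a single-valued function of $x_l$. Hence the subsequent ARGMIN on line 18 is equivalent to minimizing the reduced objective $\tilde f^{\,l}(x_l) := f^l\bigl(x_1,\ldots,x_l,\phi^{l+1}(x_1,\ldots,x_l)\bigr)$ over those candidates that also satisfy $g^l \ge 0$.

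Next, fix an outer iteration $k \le M^l$. The candidate set constructed on line 12 consists of $N^l$ i.i.d.\ uniform perturbations on the $\alpha^l$-hypercube in $D^l$ around the current iterate $X^{(k-1)}$, together with the zero perturbation (which guarantees monotone non-increase of $\tilde f^{\,l}$ across iterations, via Lemma \ref{lemma:inductionbase} and the update on line 19). As $N^l \to \infty$, this candidate set becomes dense in that hypercube almost surely, and by continuity of $f^l$ and of $\phi^{l+1}$, the per-iteration ARGMIN converges almost surely to the exact minimizer of $\tilde f^{\,l}$ on the intersection of the hypercube with $C^l$. Thus, asymptotically, each outer iteration is an exact best-in-neighborhood move.

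Now let $M^l \to \infty$. The resulting sequence $\{X^{(k)}\}$ is monotone in $\tilde f^{\,l}$ and feasible for $C^l$; under a lower-boundedness assumption on $\tilde f^{\,l}$ it converges to a limit $X^{(\infty)}$ at which no further $\alpha^l$-local improvement is possible. Chaining this with Lemma \ref{convbase} as the base case would give OPTIMIZE$(\cdot, l)$ asymptotic convergence to a solution of $P^l$, closing the induction.

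The main obstacle is precisely the local-versus-global issue: a greedy local search is not guaranteed to leave a basin of attraction, so the limit $X^{(\infty)}$ need not be the global minimizer of $\tilde f^{\,l}$. The cleanest way to close this gap is to combine connectedness of the feasible region with assumption 1 applied at level $l$ (uniqueness of the rational reaction) to rule out spurious local minima; alternatively, one can appeal to a Solis--Wets style random search convergence theorem after observing that the chain of $\alpha^l$-hypercubes over $M^l \to \infty$ iterations can, in the limit, reach any point of $C^l$. If no such structural hypothesis is available, the conclusion should honestly be weakened to convergence to a stationary point of $P^l$ rather than its global minimizer.
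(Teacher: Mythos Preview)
Your proposal is correct and takes essentially the same approach as the paper: collapse the lower levels via the inductive hypothesis to the reduced objective $v(x_l)=f^l(x_1,\ldots,x_l,\phi^{l+1}(x_1,\ldots,x_l))$, use monotone descent over sampling iterations to reach a point with no local improvement, and invoke assumption~1 (uniqueness of the rational reaction) to identify that local optimum with $x_l^*$. The paper's version is simply a terser two-case contradiction (cycling violates monotonicity; getting stuck means local optimum $=x_l^*$ by uniqueness), resolving your flagged local-versus-global gap by exactly the appeal to assumption~1 you propose, without your added care about density and continuity.
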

\begin{proof}
Since infinite samples are assumed with infinite sampling iterations and it's also assumed that a unique reaction (say $x_l^*$) exists for $P^l$, we claim that the sampling process would eventually converge towards $x_l^*$. To show that this is indeed true, we first assume that the sampling does not converge towards the optimum $x_l^*$. This can only mean one of the following:
\begin{enumerate}
\item The algorithm cycles between points $x_l^1, x_l^2, ... x_l^i$. But this must mean that $v(x_l^1) > v(x_l^2) > ... > v(x_l^i) > v(x_l^1)$, which is a contradiction. Here, we define $v(x_l):= f^l(x_1, ..., x_l, \phi^{l+1}(x_1, ... x_l))$.
\item The algorithm gets stuck on some $x_l$ and no $x_l'$ exists in its neighborhood such that $v(x_l')<v(x_l)$ and $x_l'$ satisfies appropriate constraints. However, this, by definition, is a local optimum for the player $l$ and thus, by our assumption, is the same as $x_l^*$ and results in contradiction.

\end{enumerate}

\end{proof}

\begin{claim}
MCMO eventually converges upon the unique solution.
\end{claim}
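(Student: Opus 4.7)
The plan is to prove the claim by downward induction on the player level, using the two preceding lemmas as the base case and the inductive step, and then lifting the resulting statement about $OPTIMIZE(\cdot, 1)$ to a statement about MCMO itself.

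First I would set up the induction. The base case at $l = L$ is exactly Lemma \ref{convbase}: $OPTIMIZE(\cdot, L)$ returns the unique solution to $P^L(x_1, \ldots, x_{L-1})$ for every upper-level input, via the inner solver. The inductive step at level $l$ is exactly Lemma \ref{convhyp}: if $OPTIMIZE(\cdot, l+1)$ solves $P^{l+1}(x_1, \ldots, x_l)$ for every $(x_1, \ldots, x_l)$, then $OPTIMIZE(\cdot, l)$ solves $P^l(x_1, \ldots, x_{l-1})$ in the limit $N^l, M^l \to \infty$. Chaining the lemma from $l = L$ down to $l = 1$ yields that $OPTIMIZE(\cdot, 1)$ solves $P^1$. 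Under the uniqueness assumption on the rational reaction sets, the unrolled recursion shows that the minimizer of $P^1$ coincides with the unique Stackelberg equilibrium $X^*$ of the original multilevel problem.

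Next I would connect this inner statement to the MCMO outer loop. MCMO repeatedly invokes $OPTIMIZE(X, 1)$ and updates its incumbent whenever a feasible improvement is returned (and retains the previous incumbent otherwise). In the asymptotic regime established above, every such call returns $X^*$, so after at most one outer iteration the incumbent equals $X^*$ and stays there for all subsequent iterations. The Best Objective smoothing step then selects the best of the last $k$ iterates with respect to $f^1$; each of those iterates is feasible by the feasibility claim already proved in Section \ref{subsec:feasible}, and each equals $X^*$ in the limit, so the final smoothed output of MCMO is $X^*$.

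The main obstacle I foresee is bookkeeping the order in which the asymptotic limits $N^l, M^l \to \infty$ are taken. Lemma \ref{convhyp} is stated one level at a time under the hypothesis that the level below already returns the true optimum, so to cascade the two lemmas into a global convergence statement the limits must be taken from the bottom up: first the (trivial) $l = L$ case, then $N^{L-1}, M^{L-1} \to \infty$, and so on up to $N^1, M^1 \to \infty$, with $maxiter \to \infty$ taken last. Making this nested limit rigorous is the only content beyond mechanically invoking the two lemmas, and it reduces to observing that at each level the inductive hypothesis is satisfied by the limits already taken below it.
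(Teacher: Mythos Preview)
Your proposal is correct and follows essentially the same approach as the paper: downward induction on the level index, with Lemma~\ref{convbase} as the base case and Lemma~\ref{convhyp} as the inductive step, concluding that $OPTIMIZE(\cdot,1)$ solves $P^1$. The paper's own proof is in fact much terser than yours---it simply cites the two lemmas and asserts the induction---so your additional discussion of the outer MCMO loop, the smoothing step, and the bottom-up order of the limits $N^l, M^l \to \infty$ goes beyond what the paper spells out, but none of it conflicts with or departs from the paper's argument.
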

\begin{proof}
Under our framework, the overall problem reduces to $P^1$. From lemma \ref{convbase} and \ref{convhyp}, we have a proof by induction that MCMO solves $P^1$ when $\forall l, N^l, M^l \to \infty$ by calling $OPTIMIZE(\cdot~, 1)$
\end{proof}

\section{Some Multilevel Problems}
\label{section:problems}
\subsection{Adversarial Initial Condition (AIC) determination problem}
\label{subsection:robust}
We can loosely define a Trajectory as a continuous path in some space. In robotics and control, such paths are generally produced from some initial conditions (start point, environment, etc.) by a set of rules or functions, usually called a \textit{policy}.
This problem is related to finding a worst-case initial condition for any given policy. The worst-case being an initial point from where, if a trajectory is generated according to such a policy, it ends up a) bringing the trajectory as close to touching the obstacle as possible, and b) increasing the length cost of the trajectory. Figure \ref{fig:aic} depicts the problem we construct here.

We consider a 2D plane to be our environment. The blue circular region is the feasible region $\chi \subset \real^2$ where any start point $x \in \real^2$ is allowed to reside. A fixed and known policy $\Pi$ then generates a trajectory $\Tau = \Pi(x) = \tau^0, \tau^1, ...~, \tau^i \in \real^2, ~ \tau^0 = x$ up to the finishing line $D \in \real$ using the start point such that some cost $f(\Tau) \in \real$ (modeled here as the horizontal length of the trajectory i.e. $f(\Tau) = D-\tau^0_1$) is minimized and certain feasibility conditions for each trajectory points are satisfied i.e. $g(\tau^i) \ge 0~\forall \tau^i \in \Tau$.

In this example, the condition of feasibility for a trajectory $\Tau$ is that all trajectory points be outside the obstacle region $\mathcal{O} \subset \real^2$. Modeling $\mathcal{O}$ as a circle centered at $o$ with radius $r$, our feasibility condition for each trajectory point becomes: $g(\tau^i) = ||o-\tau^i||^2 - r^2 \ge 0 ~~ \forall \tau^i \in \Tau$. The problem that we consider in this work is to find an adversarial initial point $x^a$ such that for any given policy $\Pi$, the generated trajectory $\Tau^a = \Pi(x^a)$ is as close to infeasibility and sub-optimality as possible. The rationale being that, with such obtained point, we could iterate our policy to improve it under even the most adverse initial conditions. \textit{We do not attempt policy training in this text and have left it for future work.}

\begin{figure}
% \centering
\includegraphics[width=0.4\textwidth]{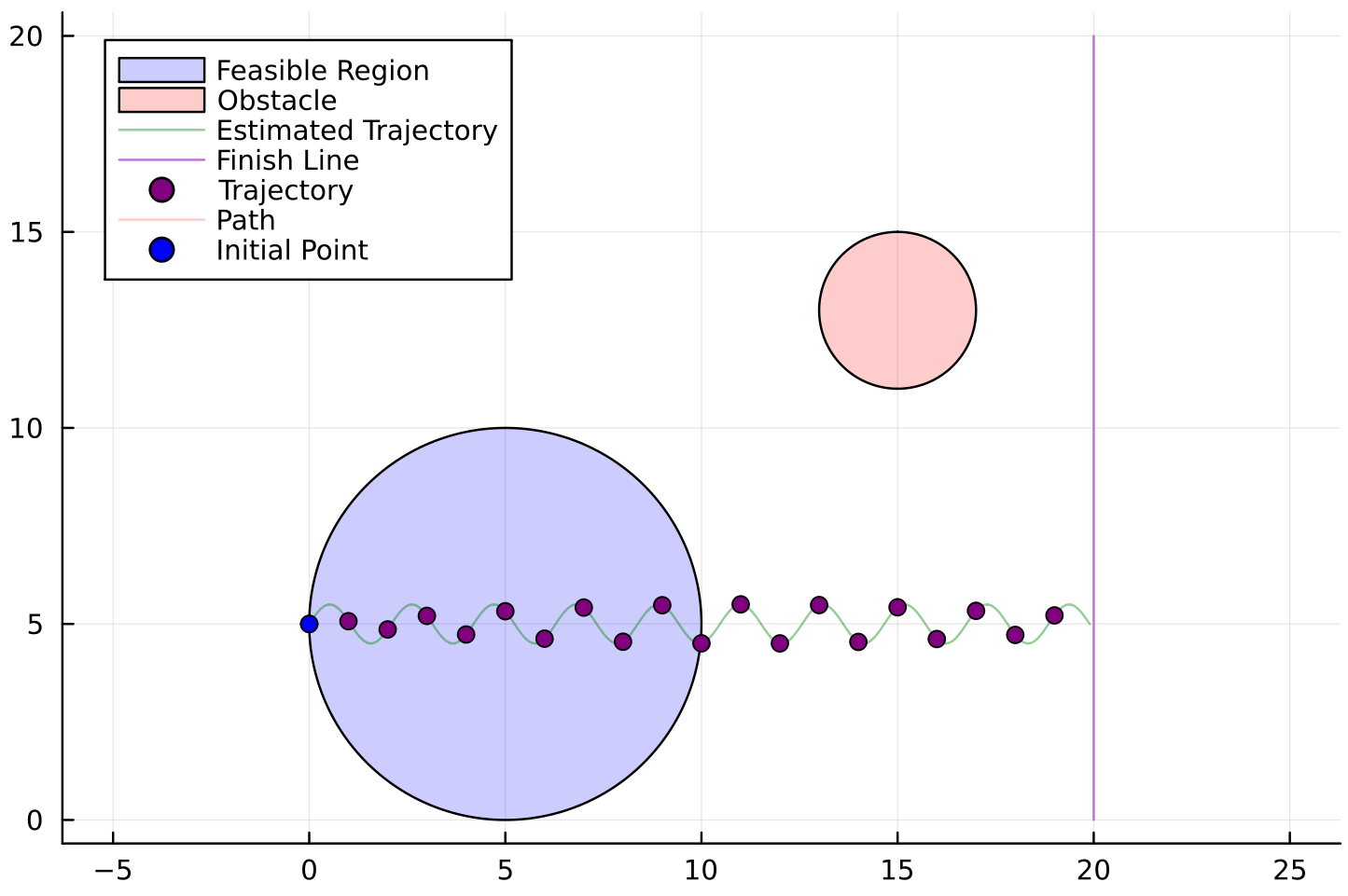}
\caption{The environment for the AIC problem. $\Tau$ is a sampled sinusoidal trajectory.}
\label{fig:aic}
\end{figure}

We can model this problem as a trilevel game as follows:
\begin{subequations}
\begin{align}
\label{eq:trilevel}
&\max_{x,T\in\real}~~f(\Tau) \\
&~~~~ s.t~~x,T\in\arg\min_{x, T}~ T\\
&~~~~~~~~~~~~~~~~~~~ s.t.~~ x \in \chi \\
&~~~~~~~~~~~~~~~~~~T \in \arg\max_{T} ~ T\\
&~~~~~~~~~~~~~~~~~~~~~~~~~~~~~s.t~T \ge 0\\
&~~~~~~~~~~~~~~~~~~~~~~~~~~~~~ \Tau = \Pi(x)\\
&~~~~~~~~~~~~~~~~~~~~~~~~~~g(\tau^i) \ge T;~~ \forall \tau^i \in \Tau
\end{align}
\end{subequations}

An interpretation of the trilevel problem is as follows: The first player wants the initial point $x\in \chi$ for the trajectory to maximize the cost of the trajectory $f(\Tau)$. The second player, whereas, wants to bring $T$ close to 0 by manipulating $x$. But, $T$ is the minimum of all feasibility scores $g(\tau^i)$ i.e., the point closest to violation. So when $T\to 0$, the closest trajectory point to the obstacle becomes even closer, and as a result, the trajectory $\Tau$ touches the obstacle $\mathcal{O}$. Here, the first and second players share the same degree of freedom, i.e., the variable $x$. Generally, in multilevel games such as these, non-overlapping d.o.f's are considered. But, as we mentioned previously, we make no such assumptions and design a general algorithm that can handle all such scenarios.

\subsection{Nested Toll-Setting problem}
A toll-setting problem is a well-known bilevel optimization problem where a toll-setter decides a toll amount for a road segment. Since they want to maximize the total income, they can neither set the toll too high, or drivers will avoid the road segment due to exorbitant fees, nor set the toll too low, or their total income will decrease. We refer readers to \cite{labbe1998bilevel} for a more detailed treatment of this problem. Instead, we focus on a generalization of this problem, i.e., the Nested Toll-Setting problem, as shown in figure \ref{fig:toll}.

\begin{figure}
\centering
\includegraphics[width=0.4\textwidth]{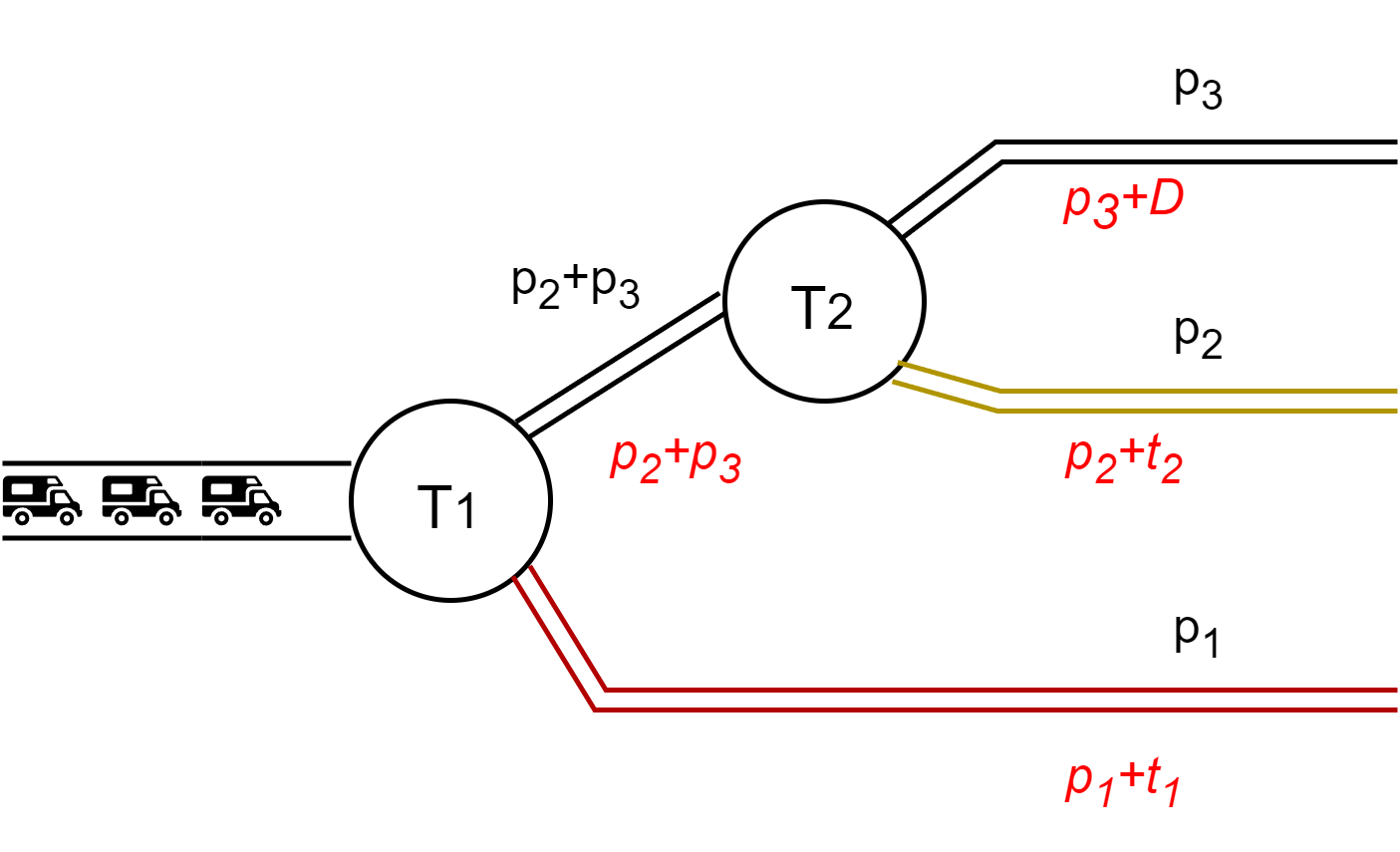}
\caption{The nested toll-setting problem. The numbers in black above a road segment represent the percentage of traffic on that segment. The numbers in red below a road segment represent the cost of taking that segment. Costs are sum of toll price, congestion, and other extra factors.}
\label{fig:toll}
\end{figure}

We consider two toll stations $T_1$ and $T_2$ established to oversee their respective tolled segments (red for $T_1$, and yellow for $T_2$). Any vehicle arriving at $T_1$ has the option to either take the tolled segment (red) by paying $t_1$ cost per unit traffic or take the non-tolled segment (black) for free. We assume that $p_1$ percentage of the original fleet takes the red tolled segment. Similarly, any vehicle arriving at $T_2$ has the option to either take the tolled segment (yellow) by paying $t_2$ per unit traffic or take the non-tolled segment. We assume that $p_2$ percentage of the original fleet takes the yellow tolled segment and $p_3$ percentage of original fleet takes the final free segment. It must be clarified that $p_1, p_2,$ and $p_3$ represent the percentage of fleet that first arrives at $T_1$ establishing $p_1 + p_2 + p_3 = 1$.

From the perspective of the fleet, the cost of travelling through any segment is the sum of a) the toll on the segment, b) congestion on the segment, and c) additional costs associated with the segment. For the purpose of this problem, we establish the congestion cost for any segment to be $\sigma \cdot p$, where $p$ is the percentage traffic on the segment and $\sigma$ is a constant. This is to say that the congestion cost increases linearly with the traffic on the segment. We further simplify this problem by setting $\sigma=1$, thereby setting the congestion cost at each segment equal to the traffic percentage at that segment. Finally, we assume that none of the road segments have any additional costs except for the final free segment, which has an extra cost $D$. This extra cost could theoretically model road length, road conditions, traffic lights, or a myriad of other factors. For this problem, we allow $D$ to be less than 0, allowing it to model a reward or a subsidy as well. Figure \ref{fig:toll} shows the costs associated with each road segment below the segments in red.

Once the toll has been set for any segment, the fleet decides to divide a certain percentage of its traffic to the tolled road or the free road to minimize its total cost across each road segment. Now if, at each toll station, the fleet makes a greedy decision, i.e., deciding whether or not to take the tolled road without considering any future toll stations on the way, then this problem can be written as the following trilevel problem:
\begin{subequations}
\begin{alignat}{2}
\max_{t_1, t_2,p_1,p_2,p_3} & p_1 \cdot t_1 + p_2 \cdot t_2 \\
s.t.~t_1, t_2 &\ge 0 \\
p_1,p_2,p_3&\in\arg\min_{p_1,p_2,p_3} p_1 \cdot (p_1 + t_1) \\
&~~~~~~~~~~~~~~~~~~~~+ (p_2 + p_3)^2\\
&~~~~~~~~~~ s.t.~p_1 \in [0, 1] \\
&~~~~~~~~~~ p_2,p_3\in\arg\min_{p_2, p_3} ~ p_2 \cdot (p_2+t_2) \\
&~~~~~~~~~~~~~~~~~~~~~~~~~~~~~~~~~~+ p_3 \cdot (p_3 + D)\\
&~~~~~~~~~~~~~~~~~~~~~~~~~~~~s.t.~p_2 \in [0,1]\\
&~~~~~~~~~~~~~~~~~~~~~~~~~~~~p_3 \in [0,1]\\
&~~~~~~~~~~~~~~~~~~~~~~~~~~~~p_1+p_2+p_3 = 1
\end{alignat}
\end{subequations}
Here, the first level corresponds to the toll-setter who decides on $t_1, t_2$ to maximize their total income. The second level is the fleet's decision at station $T_1$. It chooses the percentage of traffic to balance total congestion costs and toll costs. The third level is the remaining fleet's decision at station $T_2$ for the same.

\subsubsection{Nature of the solution}
It can be argued with relative ease that for a very high $D$, it's beneficial for the toll-setter to redirect all traffic to $T_2$ whereas for a very low $D$, the toll-setter is better off exacting all tolls from $T_1$ instead. In fact, there are two known equilibrium points $X = (t_1, t_2, p_1, p_2, p_3)$ for this problem for different values of $D$ (see Appendix \ref{appendixA}):
\begin{itemize}
\item At $D = \olsi D = 6, \olsi X = (\ge 2, 4, 0, 1, 0)$
\item At $D = \underline D = -1.5, \underline X = (1, \ge 0, 0.25, 0, 0.75)$
\end{itemize}

\section{Experiments and Results}
\label{sec:experiment}
In this section we solve some existing multilevel problems from the literature in addition to our constructed problems, i.e., the adversarial initial condition (AIC) problem and the nested toll-setting problem outlined in section \ref{section:problems} using MCMO algorithm of section \ref{section:mcmo}. To keep our parameter space restricted and the experiments simple, we run each of these problems with the same value of $\alpha$ for different levels. Furthermore, we also set all $M^l=1$ and instead setup our algorithm based solely on $\alpha^l$ and the number of samples $N^l$. All examples have been run on a personal computer with an Intel Core i5 8400 processor with a 2.8GHz frequency and 32 GBs of DDR4 RAM.

\subsection{Solving AIC using MCMO}
We now solve the AIC problem for two policies, as described below.

\subsubsection{Linear Policy: $\Pi^l$}
We define the linear policy $\Pi^l: \real^2 \mapsto \real^2$ as follows:
$$\Pi^l([x_1, x_2]^T) = [x_1+\delta, x_2]^T$$

Where $\delta \in \real$ is a step-size. Intuitively, this policy takes a point $x^i$ and generates a point $x^{i+1}$ by stepping $\delta$ distance in the $x_1$ axis while leaving $x_2$ unchanged, i.e., a horizontal line parallel to the $x_1$ axis.

\subsubsection{Non-Linear Policy: $\Pi^n$}
We define $\Pi^n : \real^2 \mapsto \real^2$ as follows:
\begin{align*}
\Pi^n([x_1, x_2]^T) = [ &x_1+\delta, \\&x_2 + A \left( \sin( B (x_1 + \delta)) - \sin( B (x_1)) \right)]^T
\end{align*}

Where $\delta \in \real$ is a step-size, $A \in \real$ is an amplitude parameter, and $B \in \real$ is frequency parameter. This generates a sinusoidal trajectory parallel to the $x_1$ axis.

\subsubsection{Setup for AIC}
For both of the trajectories, we apply MCMO to obtain adversarial points for different placements of the obstacle circles of radius r = 2. For all experiments, our feasible region is a circle centered at $[5, 5]^T$ with a radius of 5 r. The number of trajectory points is fixed at $N^\tau = 20$, and the destination plane is set to $D = 20$. For both policies, step-size $\delta$ is set to 1, and for non-linear trajectory $\Pi^n$, $A, B$ are set to $0.5, 3$, respectively. We apply MCMO for a maximum of 150 and use the best objective smoothing scheme with 10 final samples. Similarly, the step parameter "alpha is set to 3. For both policies, $N^1$ was chosen to be $2$ and $N^2$ was chosen to be $10$. In general, $N^2>N^1$ is in accordance with subsection (\ref{subsec:pracon}), but in addition, player 2 has more degrees of freedom as compared to player 1, and furthermore, both player 1 and player 2 share two degrees of freedom $(x_1, x_2)$, so no matter what player 1 chooses, it is modified by player 2, so player 1 has very little influence to begin with. As discussed previously, we set $M^1=M^2=1$. While initializing, we found that the weights $w_1 = 10^5, w_2 = 10^{-5}, w_3 = 1$ (\ref{subsec:initialization}) gave us feasible starts. In general, this will always depend on the problem being solved. The initial points produced for linear policy are shown in figure \ref{fig:linear}, and those for nonlinear policy are shown in figure \ref{fig:nonlinear}.
\begin{figure*}
\centering
\begin{multicols}{2}
\includegraphics[width=0.75\linewidth]{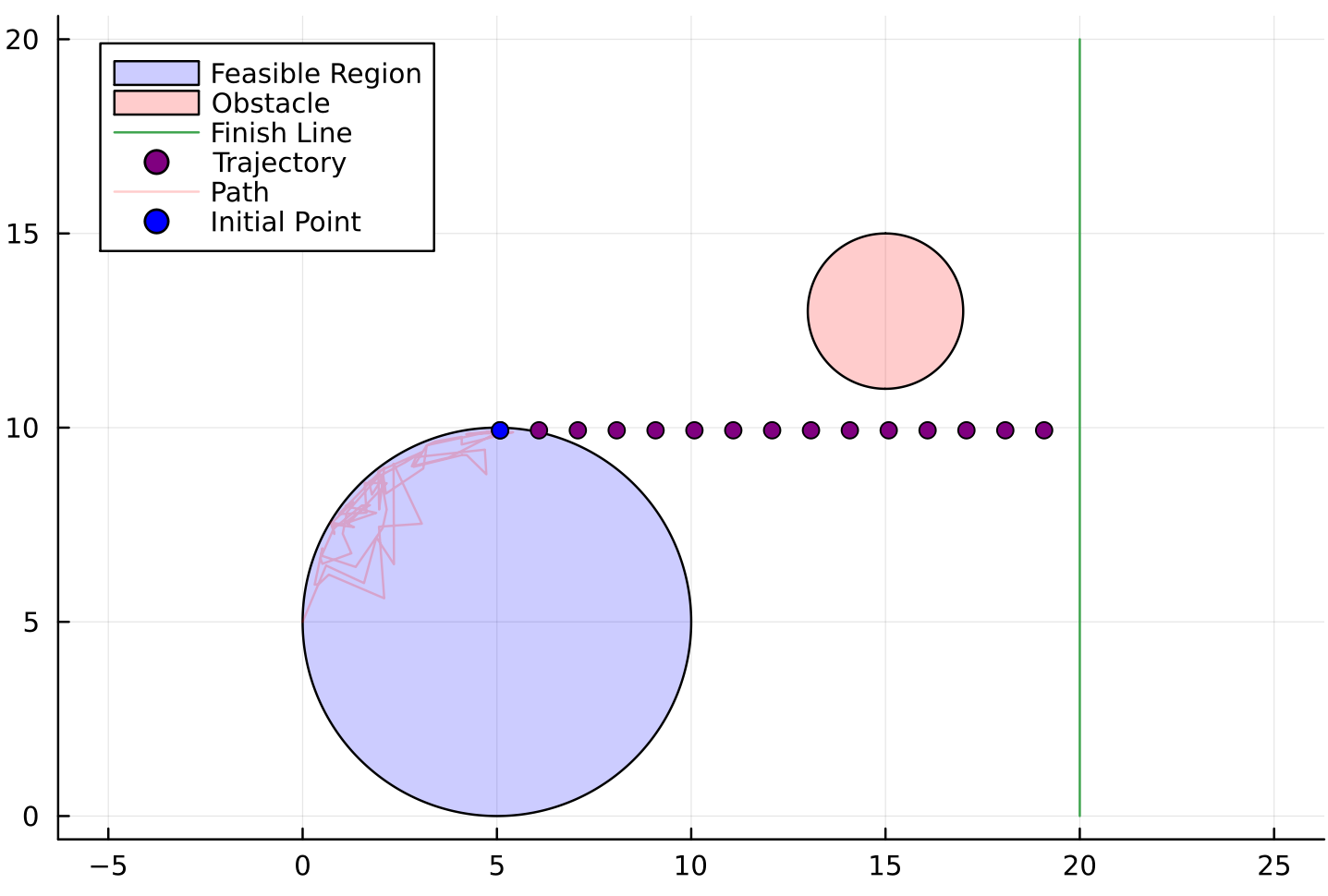}\par
\includegraphics[width=0.75\linewidth]{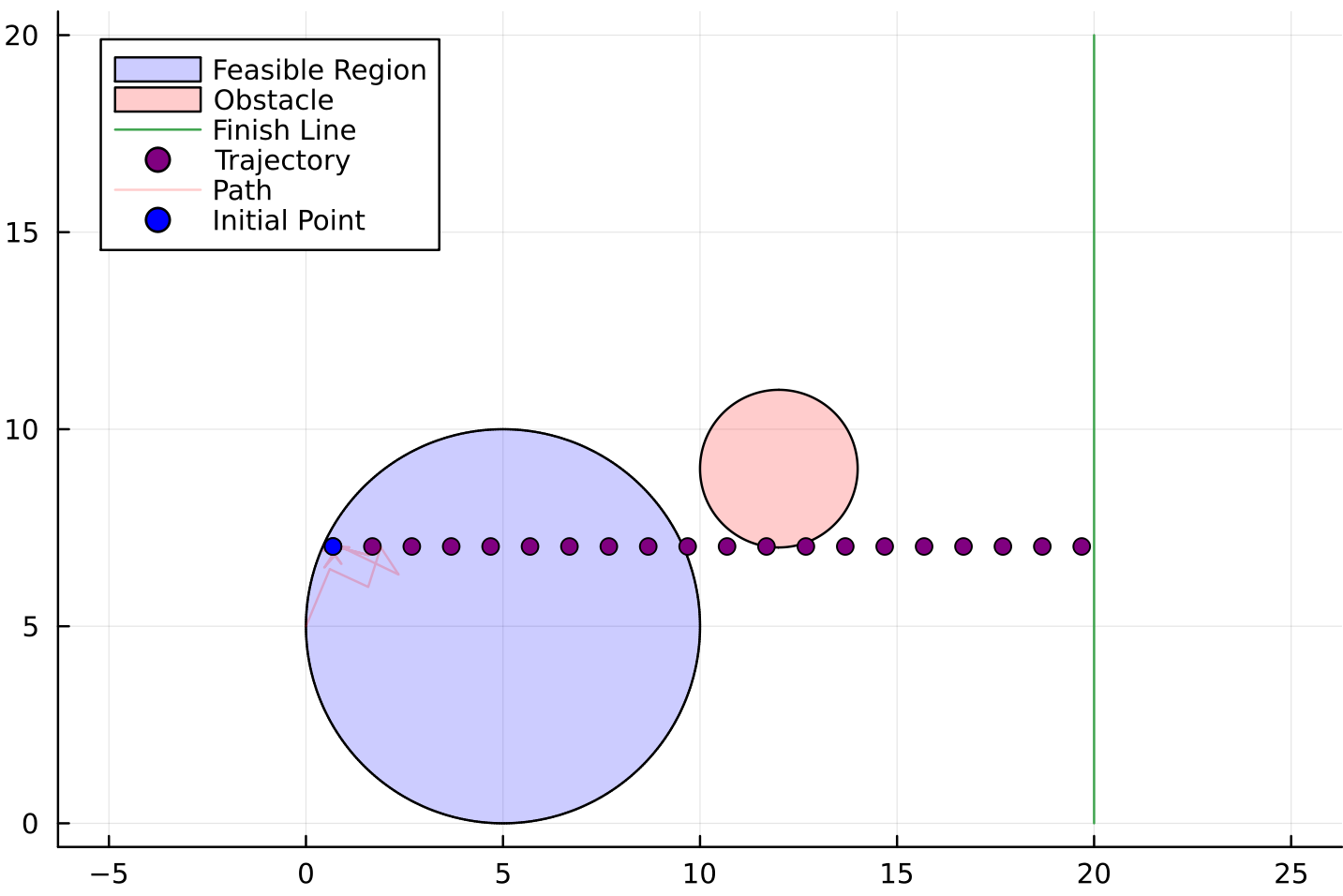}\par
\end{multicols}
\begin{multicols}{2}
\includegraphics[width=0.75\linewidth]{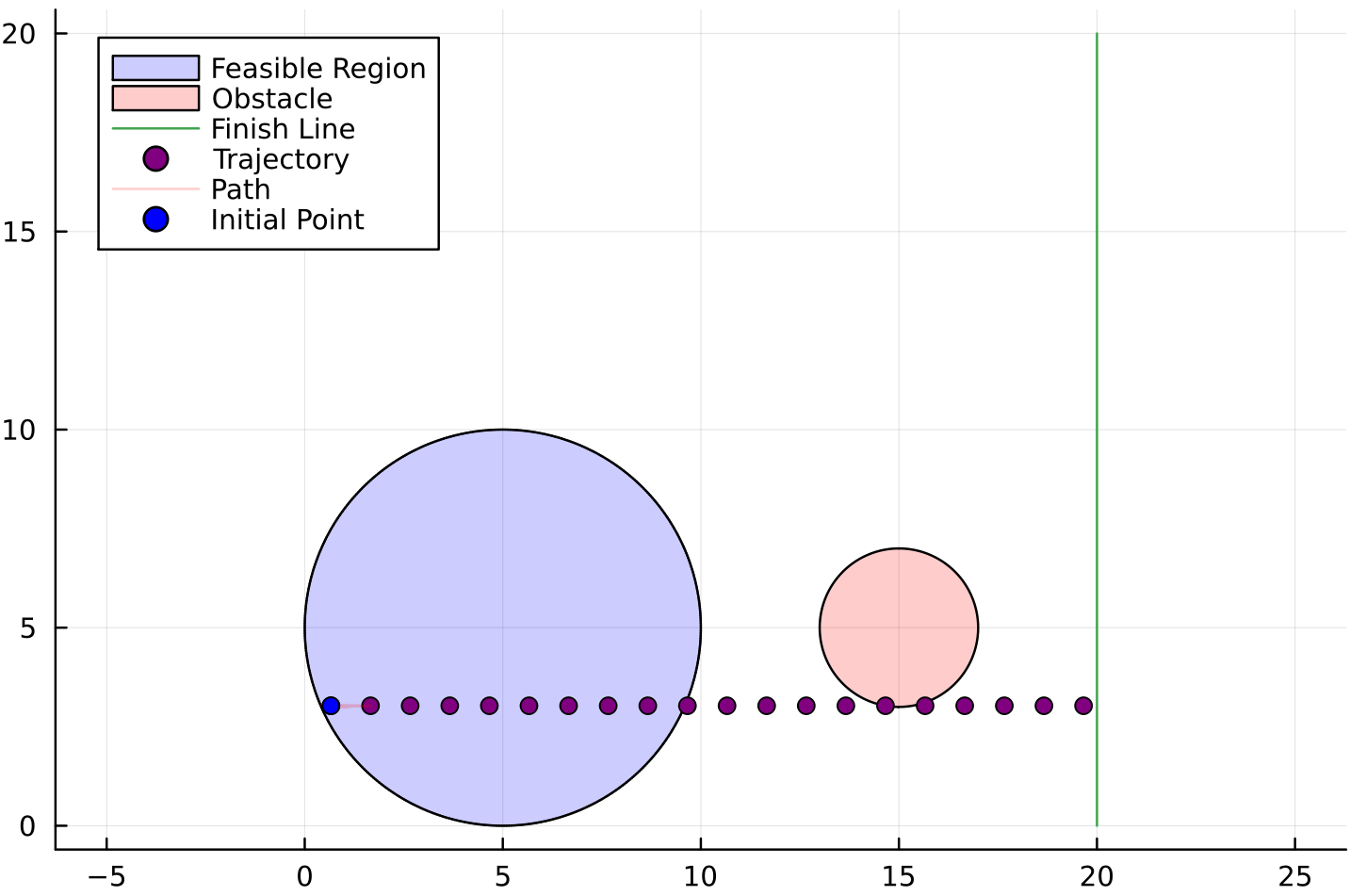}\par
\includegraphics[width=0.75\linewidth]{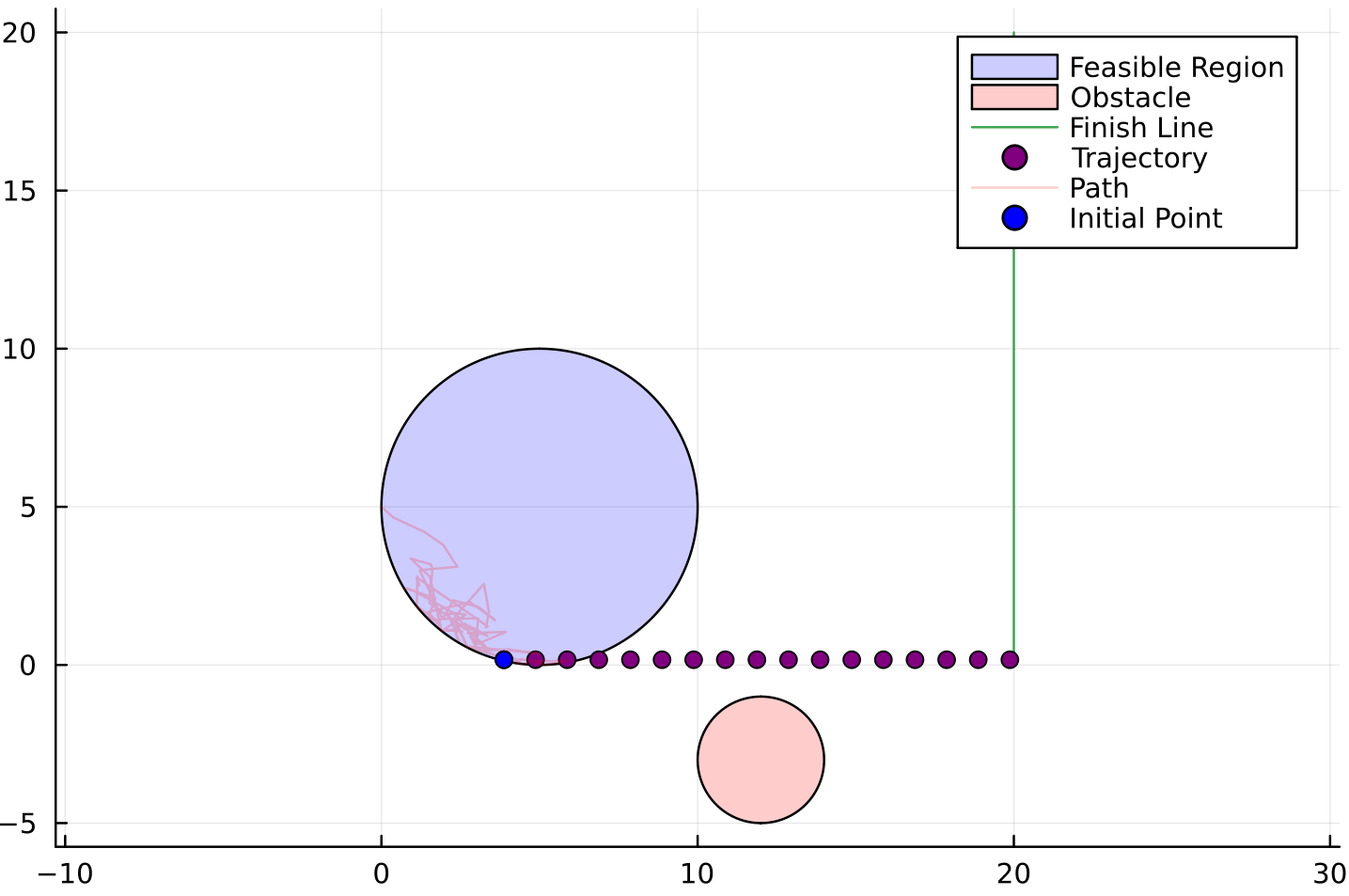}\par
\end{multicols}
\caption{Adversarial Initial points for linear trajectories for obstacle center $o=(15, 13)$ (top-left), $o=(12, 9)$ (top-right), $o=(15, 5)$ (bottom-left), and $o=(12, -3)$ (bottom-right). Time taken for the solutions is, in order, 220.8 seconds, 269.47 seconds, 262.9 seconds, and 267.39 seconds. Differences in timing indicate that the final level solver converged quickly for some instances of the problem. Ideal points are as left as possible and are either touching the obstacle or come as close to touching it as possible. Red lines indicate the path of solution in the feasible region.}
\label{fig:linear}
\end{figure*}

\begin{figure*}
\centering
\begin{multicols}{2}
\includegraphics[width=0.75\linewidth]{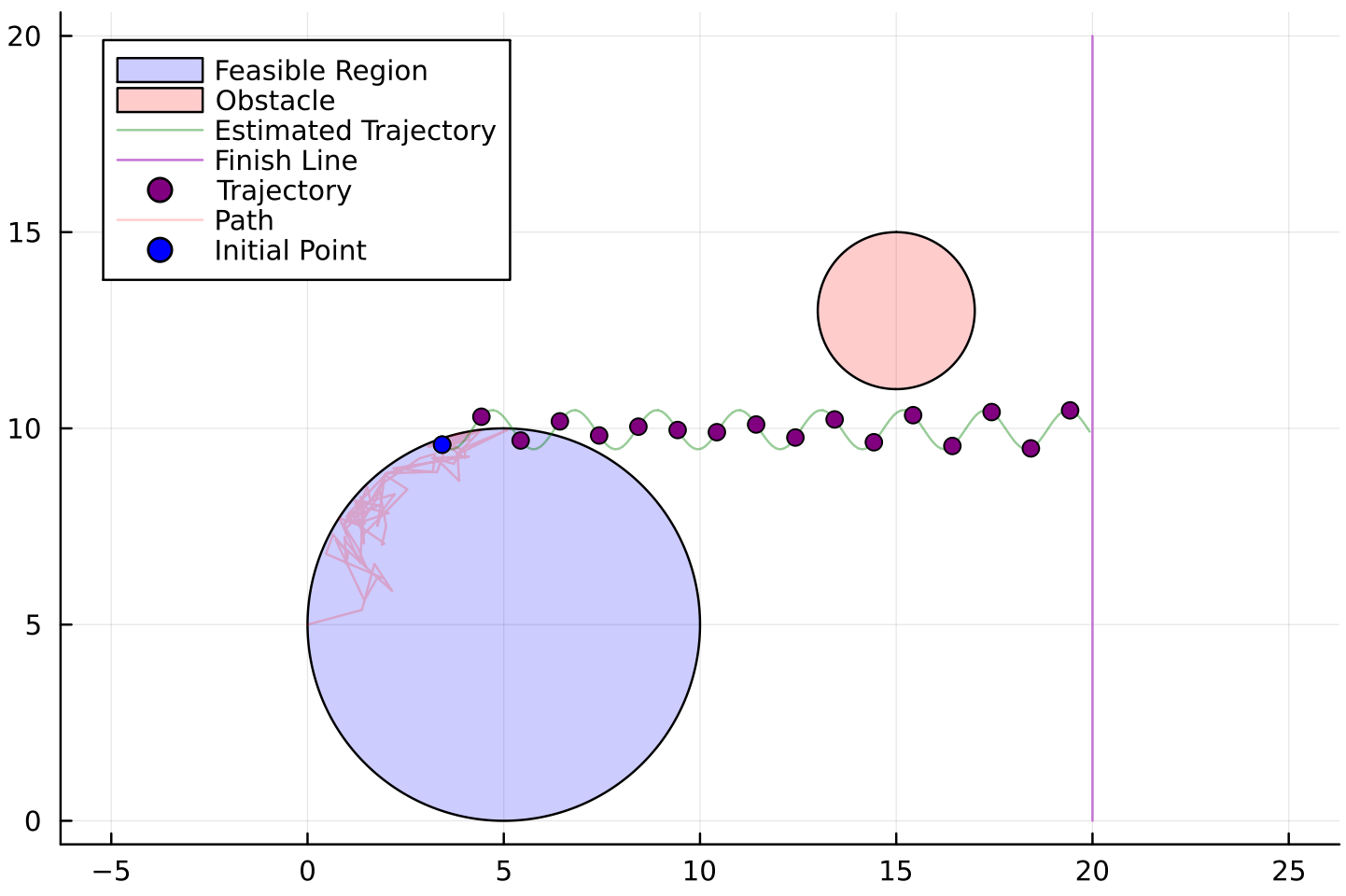}\par
\includegraphics[width=0.75\linewidth]{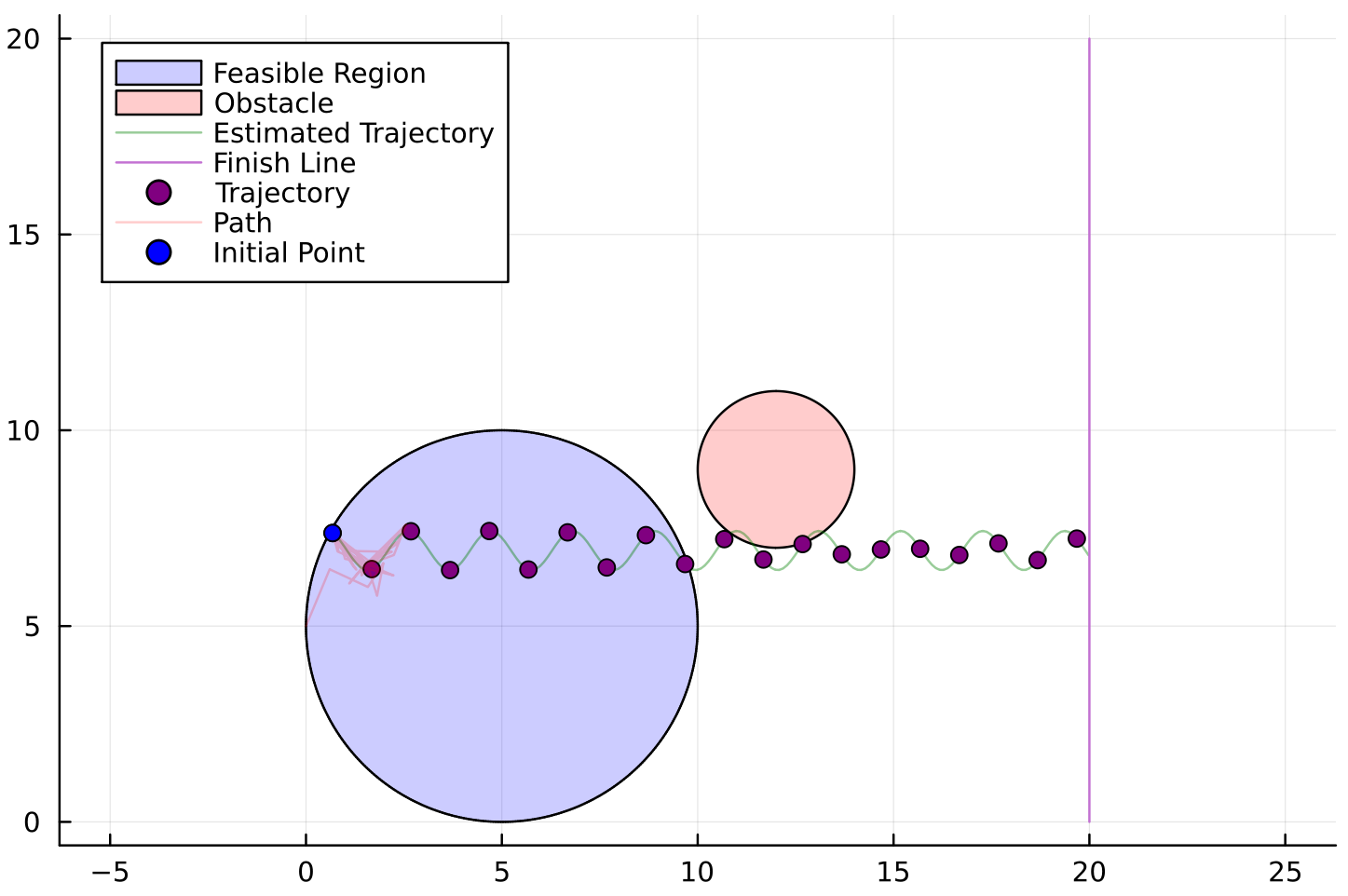}\par
\end{multicols}
\begin{multicols}{2}
\includegraphics[width=0.75\linewidth]{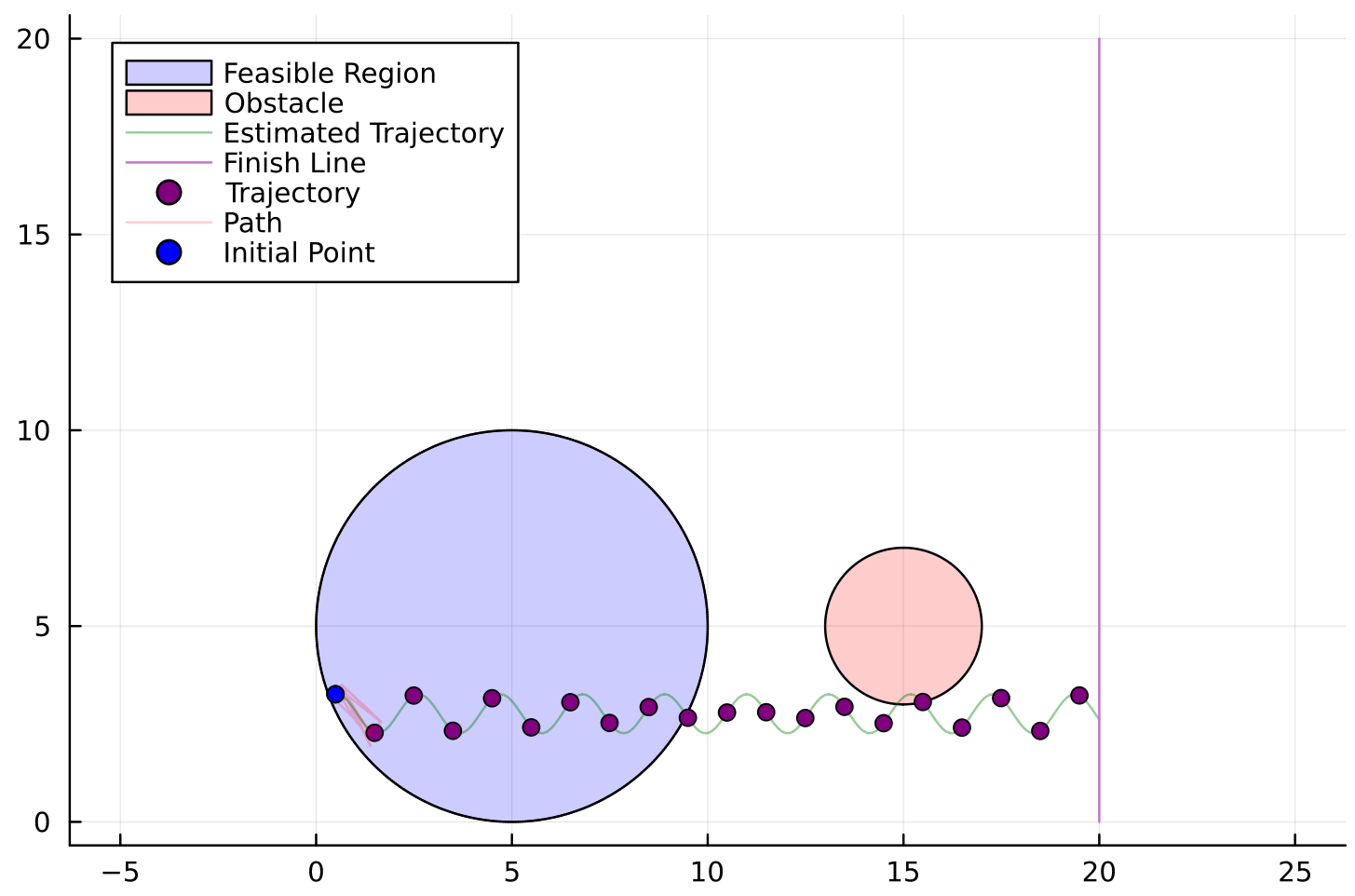}\par
\includegraphics[width=0.75\linewidth]{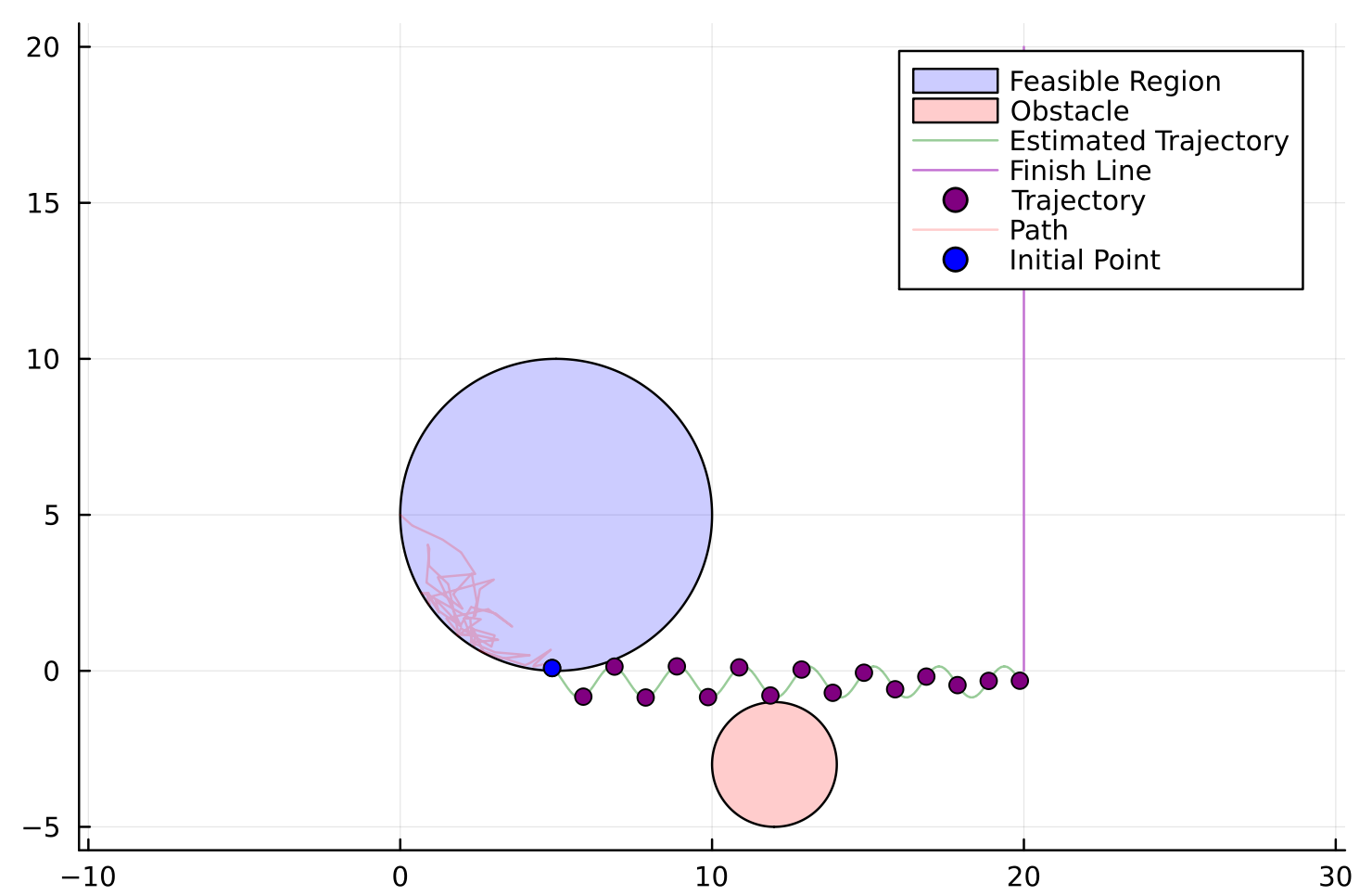}\par
\end{multicols}
\caption{Adversarial Initial points for nonlinear (sinusoidal) trajectories for the obstacle center are $o=(15, 13)$ (top-left), $o=(12, 9)$ (top-right), $o=(15, 5)$ (bottom-left), and $o=(12, -3)$ (bottom-right). The time taken for the solutions, in order, is 2272.13 seconds, 2624.19 seconds, 2408.66 seconds, and 2356.23 seconds. Ideal points are as left as possible and are either touching the obstacle or come as close to touching it as possible. Red lines indicate the path of the solution in the feasible region. All the solutions obtained are quite close to optimality. The top-left instance may not look optimal but the phase of the sinusoid and our sampling strategy may go counter to our intuition.}
\label{fig:nonlinear}
\end{figure*}

\subsection{Discussion on Results}
As can be seen, in all cases, MCMO generates proper adversarial initial conditions for this problem. For linear policy (figure \ref{fig:linear}), except the bottom-right setup, all other instances of the problem achieve optimal results. For the bottom-right instance, although the obtained point is not optimal, the error is $\approx 7.5\%$ which is not at all unreasonable considering the stochatic nature of the algorithm. However, accuracy can be further increased to desired bounds by running MCMO with a higher number of samples and sampling iterations and lower values for $\alpha$ for further iterations. The path taken by the solution at each iteration is traced by the red line. Unsurprisingly, for problems where the initial feasible solutions were closer to optimality, the algorithm converged to the answer in very few iterations. For problems where the initially feasible solution was not close to optimality, the path appears repetitive and chaotic, eventually converging to the answer, but it must be taken into account that the plotted path is a projection $(x_1, x_2)$ of the true decision space $(x_1, x_2, T)$.

For non-linear policy (figure \ref{fig:nonlinear}), almost all of the instances converge to the optimum. While the true trajectory (represented by the sinusoid) does intersect the obstacle, it is to be expected because our problem formulation has been for the discrete samples to begin with, which incidentally behave as expected. Furthermore, it may also appear that the top-right instance of the problem is not optimal, as the generated trajectory is not as close to the obstacle as possible. But owing to the fact that the sinusoid's phase depends upon the initial point and taking into account our sampling strategy, moving the point to the top does not, in fact, bring the trajectory any closer to the obstacle.

\subsection{Solving Nested Toll-Setting Problem using MCMO}
\label{subsec:nestedresults}
A comparison of the known solutions for the edge cases with the solutions obtained by MCMO is tabulated in Table \ref{table:nested}. The parameters used for both of the instances of the problem are $N^1=7, N^2=7, X_s=[0, 0, 1, 0, 0], \alpha=0.15, maxiter=100$. Smoothing scheme used is best objective smoothing with $k=10$. From the table, we can see that MCMO achieves results with error (w.r.t $f_1$) of $0.025\%~ \text{and} ~1.76\%$ respectively for parameters $D=6,-1.5$ taking, respectively, 119.11 and 129.73 seconds. The achieved results are quite satisfactory but can be made more accurate by decreasing the step sizes $\alpha$ and increasing the number of samples and sampling iterations as required.

\newcolumntype{~}{>{\global\let\currentrowstyle\relax}}
\newcolumntype{^}{>{\currentrowstyle}}
\newcommand{\rowstyle}[1]{\gdef\currentrowstyle{#1}%
#1\ignorespaces
}
\begin{table}[]
\centering
\small
\setlength{\tabcolsep}{4pt}
\begin{NiceTabular}{|l|l|l|l|l|l|l|l|}
\hline
\RowStyle[bold]{}
$D$ & $X^*$ & $t_1$ & $t_2$ & $p_1$ & $p_2$ & $p_3$ & $f_1$ \\ \hline
$6$ & $\olsi X$ & $\ge$ 2 & 4 & 0 & 1 & 0 & 4 \\
& $X_{M}$ & 2.387 & 4.032 & 0.006& 0.989 & 0.005 & 4.001 \\ \hline
$-1.5$ & $\underline X$ & 1 & $\ge$0 & 0.25 & 0 & 0.75 & 0.25 \\
& $X_{M}$ & 1.281 & 0.15 & 0.199 & 0 & 0.801 & 0.254 \\ \hline
\end{NiceTabular}
\caption{A comparison of solution obtained via MCMO $X_M$ with known analytical solution $X^*$ for the Nested Toll-Setting Problem for different parameters $D=\olsi D = 6, \text{and} ~D = \underline D = -1.5$. The achieved errors are $0.025\%$ and 1.76\% respectively.}
\label{table:nested}
\end{table}
\subsection{Numerical Examples from the Literature}
The following problem derived from \cite{sinha2003fuzzy} is a trilevel linear problem defined as:
\begin{subequations}
\begin{align*}
\label{eq:sinha}
&\max_{x_1, x_2,x_3,x_4}~~7x_1 + 3x_2 -4x_3+2x_4 \\
&~~~~ s.t.~x_3,x_4\in\arg\max_{x_3, x_4}~ x_2 + 3x_3 + 4x_4\\
&~~~~~~~~~~~s.t.~x_4\in\arg\max_{x_4} ~ 2x_1+x_2+x_3+x_4\\
&~~~~~~~~~~~~~s.t.~x_1+x_2+x_3+x_4\le 5 \\
&~~~~~~~~~~~~~~~~~~x_1+x_2-x_3-x_4\le2 \\
&~~~~~~~~~~~~~~~~~~x_1+x_2+x_3 \ge 1\\
&~~~~~~~~~~~~~~~~~~-x_1+x_2+x_3\le1\\
&~~~~~~~~~~~~~~~~~~x_1-x_2+x_3+2x_4\le 4\\
&~~~~~~~~~~~~~~~~~~x_1+2x_3+3x_4\le3\\
&~~~~~~~~~~~~~~~~~~x_4\le2\\
&~~~~~~~~~~~~~~~~~~x_1,x_2,x_3,x_4 \ge 0
\end{align*}
\end{subequations}

The optimum $f_1^* = 16.25$ for this problem is reported at $(2.25, 0, 0, 0.25)$. MCMO obtains a result of $16.145$ at $(2.205, 0.06, 0, 0.265)$ when run with the following parameters: $N^1=6, N^2=3,M^1=M^2=1, X_s=[0.4, 0.4, 0.4, 0.4], \alpha=1, maxiter=100$. The sample size was chosen owing to the difference in the number of variables, while the feasible set was deduced by observation. The smoothing scheme used is the best objective smoothing with $k=10$. The relative error in objective values for this example is $< 1\%$. The time taken to obtain the solution is 59.26 seconds.

The second problem is taken from \cite{tilahun2012new} and is defined as:
\begin{align*}
&\min_{x,y,z}~~-x + 4y \\
&~~~~ s.t. ~x+y \le 1\\
&~~~~ y,z\in\arg\min_{y,z}~ 2y+z\\
&~~~~~~~~~~s.t.~ -2x+y \le -z\\
&~~~~~~~~~~~z\in\arg\min_{z} ~ -z^2+y\\
&~~~~~~~~~~~~~~~~~~s.t.~~z\le x; x\in[0, 0.5]; y\in[0,1]; \\
&~~~~~~~~~~~~~~~~~~~~~~~~~~z\in[0,1]\\
\end{align*}
The reported optimum $f_1^*=-0.5$ is at $(0.5, 0, 0.0095)$ whereas MCMO obtains $f_1=-0.498$ at $(0.498, 0, 0.498)$ when run with the following parameters: $N^1=5, N^2=5, M^1=M^2=1, X_s=[0, 0, 0], \alpha=0.2, maxiter=100$. The choice of $\alpha$ was guided by the bounds on the decision variables, and the initial feasible solution was obtained via observation. The smoothing scheme used is the best objective smoothing with $k=10$.
The obtained minimizer disagrees with the reported minimizer, but it can be seen that the reported minimizer, i.e., $(0.5, 0, 0.0095)$ is incorrect as opposed to the actual minimizer, i.e., $(0.5, 0, 0.5)$ because once $x,y=0.5, 0$ are chosen, $z$ can be clearly increased (upto $x$) by the last player to achieve further minimality. Moreover, \cite{woldemariam2015systematic} agrees with our results on the same problem and reports $f_1=-0.4929$ at $(0.4994, 0.0016, 0.4988)$. For this problem, MCMO achieves a relative error of $< 1\%$ in 20.85 seconds.

\section{Comparisons}
\label{sec:comparisons}
\subsection{Comparisons with Existing Works}
We compare the results obtained in subsection \ref{subsec:nestedresults} for the Nested Toll-Setting problem with some of the existing methods from the literature. We chose \cite{tilahun2012new} and \cite{woldemariam2015systematic} as baselines because these methods have been proposed for arbitrarily deep multilevel optimization problems as well. Since none of these methods are capable of solving problems with equality constraints, we have to reformulate the Nested Toll-Setting problem to remove the equality constraint as follows:
\begin{align*}
% \label{eq:newtoll}
&\max_{t_1, t_2,p_1,p_2}~~p_1 \cdot t_1 + p_2 \cdot t_2 \\
&~~~~ s.t.~ t_1, t_2 \in [0,10] \\
&~~~~ p_1,p_2 \in\arg\min_{p_1,p_2}~~p_1 \cdot (p_1 + t_1) + (1-p_1)^2\\
&~~~~~~~~~~~~~~~~~~ s.t. ~p_1 \in [0, 1] \\
&~~~~~~~~~~~~~~~~~~~~~~~~p_2\in\arg\min_{p_2} ~ p_2 \cdot (p_2+t_2) ~+ \\
&~~~~~~~~~~~~~~~~~~~~~~~~~~~~~~~~~(1-p_1-p_2) \cdot (1-p_1-p_2 + D)\\
&~~~~~~~~~~~~~~~~~~~~~~~~~~~~~~~~~~s.t.~p_2 \in [0,1-p_1]\\
\end{align*}
We note that, in general, it may not always be possible to reformulate a problem to remove a particularly tricky equality constraint. For both methods, we generate an extremely high number of samples per iteration, i.e., $10^6$ for the entire decision space, and run both algorithms for 100 iterations. For $D=-1.5$ \cite{tilahun2012new}'s method obtains $f_1=2.53\times 10^{-7}$ at $X=(27.65, 36.93, 1.21\times10^{-9}, 5.95\times10^{-9})$ and for $D=6$, it obtains $f^1=2.43\times10^{-6}$ while taking 875.2 seconds and 883.41 seconds respectively. These results have a very high error compared to the theoretical best, and the reason for this is that this algorithm does not truly solve a Multilevel Stackelberg problem at all. It's instead an Iterative Best Response type algorithm, which only works when finding the Nash Equilibrium of a problem.

\cite{woldemariam2015systematic}'s approach only works for bounded decision variables, so we add two additional constraints, i.e., $t_1\in[0,10], t_2\in[0,10]$. For $D=6$, it obtains $f_1=4.75$ at $X=(9.64, 5.13, 0.066, 0.8)$ in 500.76 seconds, which overestimates the theoretical maximum by the relative error of $18.75\%$ and for $D=-1.5$, it obtains $f_1=0.729$ at $X=(9.8, 6.72, 0.066, 0.0113)$ which has a relative error of $191.6\%$ in about 465.02 seconds. Even though this method works much better than the former, it still ends up overestimating the leader's objective most of the time. This is because of the update rule of this algorithm, which only ever changes the obtained solution if it's better than the previous one for just the leader. So if a solution with high complementary error but better leader objective is acquired, it's always kept regardless of whatever may be found in subsequent iterations.
\newcommand{\column}[1]{%
\begin{bmatrix}
#1
\end{bmatrix}
}
\newcommand{\norm}[1]{\left\lVert~#1~\right\rVert_2}

\subsection{Timing Comparison for arbitary levels}
To compare the time required by MCMO to solve any given problem against its complexity, we introduce the following arbitrarily multilevel problem parameterized by $w\in(\real^+)^{n}$.

\begin{align*}
% \label{eq:nridgelin}
&\min_{x_1,...,x_n\in\real^n}~~ \norm{\column{x_1 \\ x_2 \\ \vdots \\x_n}-\column{w_1\\ w_2\\ \vdots\\ w_n}}\\
&~~~~~~~~~~~~ x_2, ...x_n \in \arg\min_{x_2,...,x_n\in\real^{n-1}}~~\norm {\column{x_2 \\ \vdots \\x_n} - \column{w_2\\ \vdots\\ w_n}}\\
&~~~~~~~~~~~~~~~~~~~~~~~~~~~~~~~~ s.t.~~x_2 \le x_1\\
&~~~~~~~~~~~~~~~~~~~~~~~~~~~~~~~~~~~~~~~ \vdots\\
&~~~~~~~~~~~~~~~~~~~~~~~~~~~~~~~~~~~~~~x_n \in\arg\min_{x_n\in\real} ~ (x_n-w_n)^2\\
&~~~~~~~~~~~~~~~~~~~~~~~~~~~~~~~~~~~~~~~~s.t. ~~ x_n\le x_{n-1}
\end{align*}

While this problem generally degenerates to a single-level problem of the form $min_x \norm{x-w}; x_i\le x_{i-1} \forall i$, it's still an ideal problem to test our algorithm for the fact that the dimension of the decision variable $x$ increases linearly with the level. When solved with the parameters $M^i=1, \alpha_i=0.25 \forall i$, for 50 iterations each for different values of sample sizes per level $N^i$, the obtained timing results are tabulated in Table \ref{tab:my_label} and the corresponding graph is shown in Figure 5. The execution time grows exponentially as the levels increase, which is as expected of a recursive algorithm.

\begin{table}
\centering
\begin{tabular}{|c|c|c|c|c|} \hline
& \multicolumn{4}{|c|}{\textbf{Time (s)}}\\ \hline
\textbf{Levels}& \textbf{N=3}&\textbf{N=4}&\textbf{N=5}&\textbf{N=6}\\ \hline
2& 1.33& 1.68& 1.71&2.33\\ \hline
3& 3.84& 5.52& 7.71&11.86\\ \hline
4& 10.03& 21.57& 39.41&77.194\\ \hline
5& 29.84& 101.81& 433.91&1356.82\\ \hline
6&97.15& 1230.33& 7,463.52&10938.59\\ \hline
\end{tabular}
\caption{Time taken as Levels increase for different sample sizes $N$}
\label{tab:my_label}
\end{table}

\begin{figure}
\label{fig:timing}
\begin{tikzpicture}[scale=0.85]
\begin{axis}[
title={Problem Complexity vs Time for different $N^i$},
xlabel={Number of Levels},
ylabel={Time for 50 Iterations [s]},
xmin=0, xmax=7,
ymin=0, ymax=5000,
xtick={1,2,3,4,5,6,7},
ytick={0,1000, 2000, 3000, 4000,5000,6000,7000,8000,9000},
legend pos=north west,
ymajorgrids=true,
grid style=dashed,
]
\addplot[
color=red,
mark=square,
]
coordinates {
(2, 1.33)(3, 3.84)(4, 10.03)(5, 29.84)(6, 97.15)
};
\addplot[
color=green,
mark=square,
]
coordinates {
(2, 1.68)(3, 5.52)(4, 21.57)(5, 101.81)(6, 1230.33)
};
\addplot[
color=blue,
mark=square,
]
coordinates {
(2, 1.71)(3, 7.71)(4, 39.41)(5, 433.91)(6, 7463.52)
};
\addplot[
color=purple,
mark=square,
]
coordinates {
(2, 2.33)(3, 11.86)(4, 77.194)(5, 1356.82)(6, 10938.59)
};
\legend{$N^i=3$, $N^i=4$, $N^i=5$, $N^i=6$}

\end{axis}
\end{tikzpicture}
\caption{Execution time as problem complexity increases for different sample sizes. The obtained graph is exponential as the problem increases linearly, which is as expected.}
\end{figure}
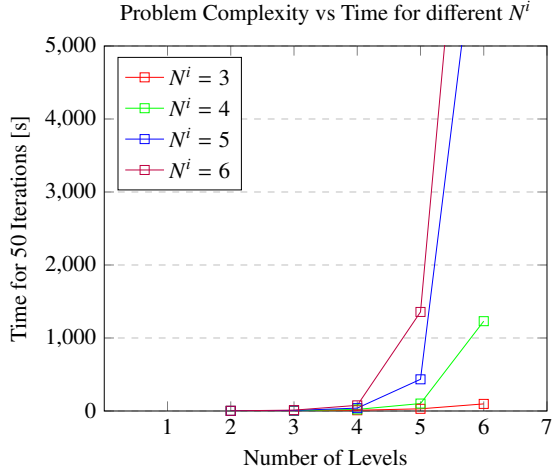

\subsection{Accuracy Comparison}
In general, we expect the accuracy of MCMO to increase as the number of samples per level $N^i$ increases. For this experiment, we use a five level version of the problem introduced in the previous subsection with a randomly generated $w = (3, 8, 7, 7, 3)$. We fix $M^i=1, \alpha^i=0.25$ and start with the initial guess $x_s=(0,0,0,0,0)$ and plot the convergence of the algorithm per iteration in Figure 6. As can be noticed, as the sample sizes increase, the convergence of the algorithm does increase, but it gets capped beyond a certain point because of the step size $\alpha^i$. The effect of increasing the step size for the same problem by keeping the sample sizes fixed at $N^i=6$ for different $\alpha^i=0.25, 0.5, 1$ is shown in Figure 7. It can be observed that increasing $\alpha^i$ for an appropriate number of samples increases the convergence speed to the optimum. Once the optimum is approached the convergence plateaus. This demonstrates that MCMO is stable, i.e., once it approaches the neighborhood of a stable solution, it remains there (provided enough samples are taken at each level).

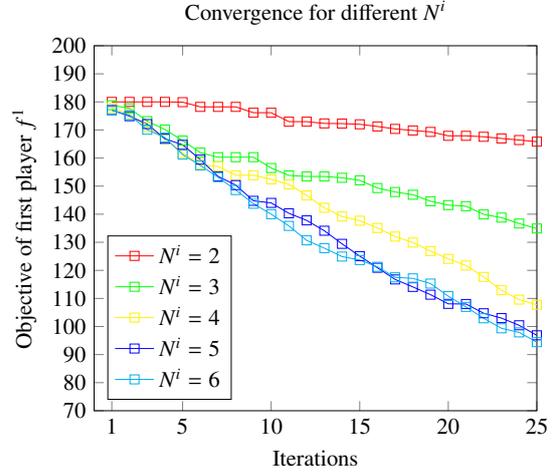
\begin{figure}
\label{fig:accuracy}
\begin{tikzpicture}[scale=0.85]
\begin{axis}[
title={Convergence for different $N^i$},
xlabel={Iterations},
ylabel={Objective of first player $f^1$},
xmin=0, xmax=25,
ymin=70, ymax=200,
xtick={1,5,10,15,20,25},
ytick={0,10,20,30,40,50,60,70,80,90,100,110,120,130,140,150,160,170,180,190,200},
legend pos=south west,
% ymajorgrids=true,
% grid style=dashed,
]
\addplot[
color=red,
mark=square,
]
coordinates {
(1,179.999999942506)
(2,179.999999942506)
(3,179.999999942506)
(4,179.999999942506)
(5,179.906098007488)
(6,178.229299423607)
(7,178.229299423607)
(8,178.229299423607)
(9,176.147235002161)
(10,176.147235002161)
(11,172.958768413922)
(12,172.958768413922)
(13,172.333700077797)
(14,172.275815319771)
(15,172.020537190578)
(16,171.252270492493)
(17,170.378798320103)
(18,169.827023329116)
(19,169.316554381522)
(20,167.920177233806)
(21,167.920177233806)
(22,167.575821792866)
(23,166.9786843419)
(24,166.412534406513)
(25,165.837043765998)

};

\addplot[
color=green,
mark=square,
]
coordinates {
(1,178.820312035935)
(2,177.697758940571)
(3,173.155953233089)
(4,170.162486356941)
(5,166.232746745412)
(6,161.92645659303)
(7,160.304002687388)
(8,160.304002687388)
(9,160.304002687388)
(10,156.449589620247)
(11,153.80867092927)
(12,153.432685119472)
(13,153.432685119472)
(14,152.967829953642)
(15,152.066118611413)
(16,149.311765225477)
(17,147.905453069716)
(18,146.936409954907)
(19,144.607601686924)
(20,143.240786640583)
(21,142.861434272013)
(22,139.907454215163)
(23,138.791586856514)
(24,136.677692187752)
(25,134.899933717834)

};

\addplot[
color=yellow,
mark=square,
]coordinates{
(1,177.834662336204)
(2,175.599575242034)
(3,171.053240355354)
(4,166.728682948039)
(5,161.816587875231)
(6,157.630997172662)
(7,157.023494559204)
(8,153.875693890327)
(9,153.875693890327)
(10,152.395895234977)
(11,150.620211247164)
(12,146.67914033351)
(13,142.276934020807)
(14,139.188772728556)
(15,137.679176601163)
(16,135.089119013311)
(17,132.028314033516)
(18,129.913461632603)
(19,126.861505614253)
(20,124.077643367852)
(21,121.738989976475)
(22,117.653816952603)
(23,112.923583824496)
(24,109.583063644128)
(25,107.733869656874)
};

\addplot[
color=blue,
mark=square,
]coordinates{
(1,177.096904471805)
(2,174.904017793307)
(3,172.100264490269)
(4,166.86214632643)
(5,164.710676101897)
(6,159.446785158958)
(7,153.486531534488)
(8,150.356172533257)
(9,144.758325817505)
(10,144.0271838961)
(11,140.315278944645)
(12,137.801182649402)
(13,134.113263048204)
(14,129.443235465722)
(15,125.067137439355)
(16,120.926645151734)
(17,116.762800744671)
(18,114.101858446831)
(19,111.332500490264)
(20,108.069546669344)
(21,108.017441870698)
(22,104.721305693133)
(23,102.99549492505)
(24,100.443248848969)
(25,96.8674537089256)

};

\addplot[
color=cyan,
mark=square,
]coordinates{(1,176.989432074654)
(2,175.272012012475)
(3,170.078685527855)
(4,167.322360262341)
(5,161.22668018116)
(6,157.359136937528)
(7,153.182920989661)
(8,148.60472189542)
(9,143.726451917734)
(10,140.008306004607)
(11,135.824976116912)
(12,130.685746217479)
(13,127.913308403177)
(14,124.911910574068)
(15,123.663831743742)
(16,121.262586233099)
(17,117.505090719225)
(18,117.148660227965)
(19,115.288385974704)
(20,110.81709941656)
(21,106.978792228646)
(22,102.928171687576)
(23,99.3656532609362)
(24,97.903794094871)
(25,94.4344284307565)
};
\legend{$N^i=2$, $N^i=3$, $N^i=4$, $N^i=5$, $N^i=6$}

\end{axis}
\end{tikzpicture}
\caption{Convergence for different sample sizes. After a certain threshold, step size $\alpha^i$ caps the performance.}
\end{figure}

\begin{figure}
\label{fig:accuracy2}
\begin{tikzpicture}[scale=0.85]
\begin{axis}[
title={Convergence for different $\alpha^i$},
xlabel={Iterations},
ylabel={Objective of first player $f^1$},
xmin=0, xmax=25,
ymin=0, ymax=200,
xtick={1,5,10,15,20,25},
ytick={0,10,20,30,40,50,60,70,80,90,100,110,120,130,140,150,160,170,180,190,200},
legend pos=north east,
% ymajorgrids=true,
% grid style=dashed,
]
\addplot[
color=red,
mark=square,
]
coordinates {
(1,176.989432074654)
(2,175.272012012475)
(3,170.078685527855)
(4,167.322360262341)
(5,161.22668018116)
(6,157.359136937528)
(7,153.182920989661)
(8,148.60472189542)
(9,143.726451917734)
(10,140.008306004607)
(11,135.824976116912)
(12,130.685746217479)
(13,127.913308403177)
(14,124.911910574068)
(15,123.663831743742)
(16,121.262586233099)
(17,117.505090719225)
(18,117.148660227965)
(19,115.288385974704)
(20,110.81709941656)
(21,106.978792228646)
(22,102.928171687576)
(23,99.3656532609362)
(24,97.903794094871)
(25,94.4344284307565)

};

\addplot[
color=green,
mark=square,
]
coordinates {
(1,174.011058686556)
(2,170.632554787861)
(3,160.491725468431)
(4,155.243285278679)
(5,143.716606951587)
(6,136.593390644281)
(7,128.879530116202)
(8,120.722282046825)
(9,112.264421114454)
(10,105.952020634007)
(11,99.0056636269257)
(12,90.7183241806345)
(13,86.3573723105384)
(14,81.8818908051875)
(15,79.9855635766004)
(16,76.3046152087433)
(17,70.8040111554374)
(18,70.3239499678501)
(19,67.6384834269641)
(20,61.6131020077344)
(21,56.7835588815492)
(22,51.9834242851695)
(23,47.658681971112)
(24,45.74322069603)
(25,41.813079468447)

};

\addplot[
color=blue,
mark=square,
]coordinates{
(1,168.150895349617)
(2,161.619232454885)
(3,142.320868415262)
(4,132.88082940086)
(5,112.486200087759)
(6,100.687248192978)
(7,87.8138126074409)
(8,75.4959169451073)
(9,63.7749111724557)
(10,55.6456755954625)
(11,47.3249286916404)
(12,38.1752674350704)
(13,34.1890587289117)
(14,30.4625122846397)
(15,28.4878532700744)
(16,26.8204500697092)
(17,22.5100327005517)
(18,22.09226053819)
(19,20.4683880907583)
(20,17.5868267861063)
(21,16.6951206247077)
(22,16.4793117826534)
(23,15.1219339880448)
(24,15.0846338500161)
(25,15.0846338500161)

};

\addplot[color=cyan]coordinates{
(1,14.75)
(2,14.75)
(3,14.75)
(4,14.75)
(5,14.75)
(6,14.75)
(7,14.75)
(8,14.75)
(9,14.75)
(10,14.75)
(11,14.75)
(12,14.75)
(13,14.75)
(14,14.75)
(15,14.75)
(16,14.75)
(17,14.75)
(18,14.75)
(19,14.75)
(20,14.75)
(21,14.75)
(22,14.75)
(23,14.75)
(24,14.75)
(25,14.75)

};

\legend{$\alpha^i=0.25$, $\alpha^i=0.5$, $\alpha^i=1$, True}

\end{axis}
\end{tikzpicture}
\caption{Convergence for different step sizes for $w=(3,8,7,7,3)$. It can be shown that the minimum for this problem is at $X=(6.25, 6.25, 6.25, 6.25, 3)$ with leader's objective $f^1=14.75$ (shown in the plot by the cyan horizontal line)}
\end{figure}
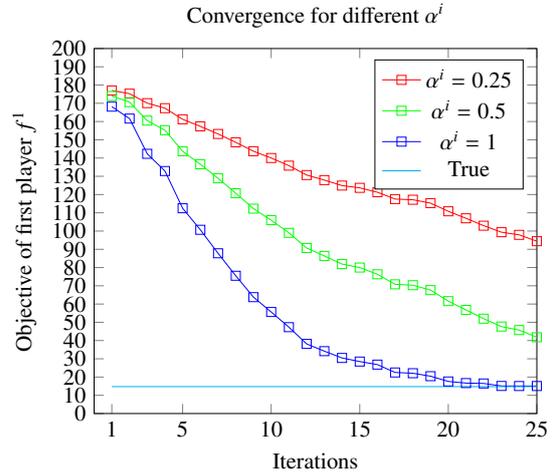

\section{Conclusion and Future Work}
\label{section:conclusions}
Stackelberg games arise in many real-world scenarios, and conversely, many interesting economic, control, and other causal phenomena can be naturally modeled as Stackelberg games. Multilevel Stackelberg games provide a further generalization that expands the perimeter of interesting interactions that can be modeled by such rules. However, the difficulty involved in solving them is non-trivial and can present a major challenge to those who seek to model and solve these kinds of problems. In this paper, we introduced two such example problems that can effortlessly be modelled using multilevel formulation, i.e., a) the Adversarial Initial Condition determination problem, where we find a challenging initial condition for any provided policy, and b) the Nested Toll-Setting problem, which is a generalization of the famous Bilevel Toll-Setting problem. We then presented MCMO, a stochastic algorithm that can be used to solve problems involving an arbitrary number of leaders and followers (i.e., arbitrarily deep multilevel games) up to desired accuracy, and presented proofs for its feasibility and (under certain assumptions) optimality. We then used this algorithm to solve the multilevel problems we constructed and also solved a few problems from the literature for comparison, achieving satisfactory results in each case.

Future work in this direction would be to improve the convergence speed and accuracy of this algorithm. Furthermore, a desired generalization of this algorithm would be one that works with multiple leaders and multiple followers at all levels (or the so-called Multilevel Decentralized Problem). This would enable us to solve a wider variety of interesting problems that involve numerous stakeholders with varying levels of power amongst themselves. And finally, for applications where an exact solution is required, we want to explore methods to obtain them by leveraging the approximate solution provided by MCMO.

\bibliographystyle{elsarticle-harv}
\bibliography{cas-refs}

%% The Appendices part is started with the command \appendix;
%% appendix sections are then done as normal sections
\appendix
\label{appendixA}
\section{Nested Toll-Setting Problem}
The third level can be reformulated as:
\begin{subequations}
\begin{align}
\min_{p_2} ~&p_2 \cdot (p_2 + t_2) + \\&(1-p_1-p_2) \cdot (1-p_1-p_2+D)\\
&0\le p_2 \le 1-p_1
\end{align}
\end{subequations}

The unconstrained stationary point for this problem is when:
\begin{subequations}
\begin{align}
\nonumber p_2 + p_2 + &t_2 - (1-p_1-p_2)\\&- (1-p_1-p_2+D) = 0\\
&4p_2 + 2p_1 +t_2-D-2=0\\
&p_2 = \frac{D+2-2p_1-t_2}{4}
\end{align}
\end{subequations}

From the theory of constrained minimization, the response of the third level can then be written as:

\begin{align}
\label{eq:p2response}
p_2(t_1, t_2) = \begin{cases}
0 & \text{if} ~ D+2-2p_1(t_1)\le t_2 \\
1-p_1(t_1) & \text{if} ~D-2+2p_1(t_1)\ge t_2 \\
\frac{2+D-2p_1(t_1)-t_2}{4} & \text{otherwise}
\end{cases}
\end{align}

Similarly, we can obtain the response of the second level as:
\begin{align}
\label{eq:p1response}
p_1(t_1) = \begin{cases}
0 & \text{if} ~ t_1 \ge 2 \\
1 & \text{if} ~ t_1 \le -2\\
\frac{1}{4}(2-t_1) & \text{otherwise}
\end{cases}
\end{align}

From equations \ref{eq:p1response} and \ref{eq:p2response}, we can define the following parameterized constraint sets:
\begin{align*}
\mathcal{C}_1(D) := \{ & p_1=0 \wedge t_1 \ge 2, \\
& p_1=1 \wedge t_1\le-2, \\
&p_1=\frac{1}{4}(2-t_1) \wedge -2< t_1< 2\}
\end{align*}
\begin{align*}
\mathcal{C}_2(D) := \{ &p_2=0 \wedge ~ D+2-2p_1\le t_2 \\
&p_2=1-p_1 \wedge ~D-2+2p_1(t_1)\ge t_2 \\
&p_2=\frac{2+D-2p_1(t_1)-t_2}{4} \wedge \\
&~~~~~~~D-2+2p_1(t_1)< t_2 < D+2-2p_1 \}
\end{align*}

The solution for the Nested toll-setting problem would then simply be:
\begin{align}
\label{finaleq}
\max_{t_1, t_2, p_1, p_2} &p_1 \cdot t_1 + p_2 \cdot t_2 \\
\nonumber &t_1 \ge 0, t_2 \ge 0\\
\nonumber &(t_1,t_2,p_1,p_2) \in \bigcup \mathcal{C}_1(D) \cdot \mathcal{C}_2(D)
\end{align}

Equation \ref{finaleq} is a standard quadratic programming problem defined over a union of polyhedral regions. It can be solved for each of the polyhedral regions using a standard solver to obtain the optimum value for the problem as follows:
\begin{itemize}
\item For $D = 6$, the obtained maximum is $4$ for $p_1=0, t_1=14, p_2=1, t_2=4$. The results imply that the toll-setter benefits when no traffic takes the tolled road at station $T_1$, i.e., ($p_1=0$) and all traffic takes the tolled road at station $T_2$.
It can be seen from equation \ref{eq:p1response}a that the same objective can be realized for any $t_1\ge2$ (as this makes $p_1=0$).

\item For $D = -1.5$, obtained maximum is $0.25$ for $p_1=0.25, t_1=1, p_2=0, t_2=4.53$. For this case, the toll-setter has to obtain all income from station $T_1$ as no traffic will take station $T_2$ due to the incentive on the non-tolled road, i.e., $p_2=0$. Like before, from equation \ref{eq:p2response}a, for the given values of $p_1$ and $D$, any $t_2\ge0$ is a solution, as this yields $p_2=0$ for the same objective.
\end{itemize}
\end{document}